\documentclass[journal,10pt,twocolumn,twoside]{IEEEtran}
\usepackage{amsmath,amsfonts,amssymb,amsbsy,bm,paralist,theorem,color}
\usepackage{graphicx,relsize}
\usepackage{algorithm}
\usepackage{algpseudocode}
\usepackage{multicol}
\usepackage{multirow}
\usepackage{subcaption}
\usepackage{cite}
\usepackage{hyperref}
\usepackage{cuted}
\usepackage{pifont}
\usepackage{comment}
\usepackage{xcolor}
\usepackage{flushend}

\newtheorem{lem}{Lemma}
\newtheorem{rem}{Remark}

\graphicspath{{fig/}}

\definecolor{orange}{RGB}{255,107,0}
\definecolor{green}{RGB}{0,160,20}

\begin{document}
\title{On Dual-Fed Pinching Antenna Systems with In-Waveguide Attenuation}

\author{Ximing Xie,~\IEEEmembership{Member,~IEEE}, Hao Qin,~\IEEEmembership{Member,~IEEE}, Fang Fang,~\IEEEmembership{Senior Member,~IEEE},\\ and Xianbin Wang,~\IEEEmembership{Fellow,~IEEE}

\thanks{Ximing Xie, Fang Fang and Xianbin Wang are with the Department of Electrical and Computer Engineering, and Fang Fang is also with the Department of Computer Science, Western University, London, ON N6A 3K7, Canada (e-mail: \{xxie269, fang.fang, xianbin.wang\}@uwo.ca).}
\thanks{Hao Qin is with the School of Electrical and Electronic Engineering, University College Dublin, Dublin, D04 V1W8, Ireland (e-mail: hao.qin@ucdconnect.ie).}
\vspace{-1cm}
}\maketitle
%##################################################################
\begin{abstract}
Pinching antenna systems (PAS) have recently emerged as a promising architecture for flexible and reconfigurable wireless communications. However, their performance is fundamentally constrained by in-waveguide attenuation, which is non-negligible in practical dielectric waveguides and can severely degrade the achievable data rate, particularly for long waveguides. To overcome this limitation, we propose a dual-fed PAS (DF-PAS), in which each waveguide is equipped with two feed points located at the two ends, enabling dynamic feed-point selection based on user locations. This design effectively shortens the in-waveguide propagation distance and mitigates attenuation-induced power loss without modifying the waveguide structure or the PA actuation mechanism. We investigate the DF-PAS in both single- and multi-waveguide scenarios. For the single-waveguide case, we derive closed-form high-SNR approximations of the ergodic rate and obtain closed-form solutions for the optimal PA position and feed-point selection under time-division multiple access (TDMA). We then extend DF-PAS to a multi-waveguide scenario, where we first derive closed-form high-SNR approximations of the ergodic rate and then formulate a joint optimization problem over feed-point selection, PA placement, and beamforming under general orthogonal multiple access (OMA). To solve this problem efficiently, we develop a two-phase optimization framework that integrates greedy feed-point switching, gradient-based PA placement, and WMMSE-based beamforming. Simulation results demonstrate that the proposed DF-PAS consistently outperforms conventional single-fed PAS (SF-PAS) across various network configurations, validating its effectiveness as a practical and scalable solution for mitigating in-waveguide attenuation in PAS-enabled wireless networks.
\end{abstract}

\begin{IEEEkeywords}
Pinching antenna systems, dual-fed, feed-point selection
\end{IEEEkeywords}

\section{Introduction}
The evolution toward the next generation wireless networks has spurred a paradigm shift from static to dynamic antenna architectures \cite{new2024tutorial, zhu2023movable}. For example, reconfigurable intelligent surfaces (RIS) manipulate electromagnetic propagation environments via phase shifting elements to fine-tune wireless channels \cite{wu2019intelligent}. Fluid antenna systems (FAS) and movable antenna systems (MAS) utilize conductive fluids or mechanical displacement, respectively, to dynamically optimize the transceiver's position within a continuous spatial field \cite{wong2020fluid, zhu2023movable}. However, they exhibit fundamental limitations in combating large-scale fading due to line-of-sight (LoS) link blockages. To address this challenge, DOCOMO proposed the concept and developed a prototype of pinching antenna systems (PAS) in 2022 \cite{suzuki2022pinching}. A pinching antenna system typically comprises dielectric waveguides equipped with low-cost dielectric pinchers, referred to as pinching antennas (PA). Electromagnetic waves radiate from these PAs, allowing the system to establish LoS links by placing PAs at optimal positions. \par

Although PASs show strong potential for enabling reliable wireless connectivity, the technology remains at an early stage of development and faces several challenges for practical deployment. One fundamental challenge is in-waveguide attenuation, whereby the signal power gradually decays as it propagates along the waveguide. As a result, PAs located farther from the feed point radiate substantially lower power, which can severely degrade the quality of service (QoS) experienced by users. However, most existing works on PASs neglect the effect of in-waveguide attenuation and assume lossless signal propagation within the waveguide \cite{ding2025flexible, 11165763, shan2025exploiting}. This issue remained largely unexplored until the authors of \cite{tyrovolas2025performance, xu2025pinching2} investigated the effect of in-waveguide attenuation on the performance of PASs. Although in-waveguide attenuation has been recognized, it remains insufficiently investigated. Since PASs commonly employ dielectric waveguides, this issue becomes non-negligible. \par

To mitigate the effect of in-waveguide attenuation, this paper proposes a dual-fed pinching antenna system (DF-PAS). In a DF-PAS, the waveguide is equipped with two feed points located at its two ends, enabling electromagnetic waves to propagate toward the PA from two directions. Depending on user locations, the controller can flexibly select the appropriate feed point for signal injection. The effective in-waveguide propagation distance is reduced, thereby mitigating in-waveguide attenuation. Moreover, DF-PASs preserve the original waveguide architecture and PA actuation mechanism, requiring only an additional feed point. As a result, the proposed DF-PAS offers an effective mechanically simple solution that is readily compatible with existing PAS implementations.

\subsection{Related Works}
As an emerging technology, numerous works have focused on performance analysis and optimization for PASs \cite{ding2025flexible, tyrovolas2025performance, cheng2025performance, tegos2025minimum, xu2025rate, xie2025low, 11131179, 11123791}. The mathematical modeling framework of PASs was first established in \cite{ding2025flexible}, which provides the foundation for subsequent studies. Building on this model, \cite{tyrovolas2025performance} analyzed the performance of PASs while explicitly accounting for in-waveguide attenuation. The authors of \cite{cheng2025performance} further extended the performance analysis to non-orthogonal multiple access (NOMA) scenarios. From optimization perspectives, the uplink and downlink data rates of PASs were maximized in \cite{tegos2025minimum} and \cite{xu2025rate}, respectively, by optimizing the PA positions. To further reduce computational complexity, a low-complexity PA position optimization approach for sum rate maximization was proposed in \cite{xie2025low}. In addition to the data rate, energy efficiency has also been investigated as a key metric in PAS-enabled networks in \cite{11131179, 11123791}. To further explore the potential of PASs in practical environments, several works have investigated PAS-enabled systems under blockage conditions \cite{11036558, 11178241, 11315149, xie2026blcokage}. Specifically, \cite{11036558, 11178241, 11315149} adopted probabilistic blockage models, while \cite{xie2026blcokage} developed a deterministic blockage model to capture geometry-dependent blocking effects. \par

Recent works have also integrated PASs with other advanced wireless technologies, including integrated sensing and communication (ISAC) \cite{11122551, li2025pinching}, mobile edge computing (MEC) \cite{11310324, hu2025pass}, and federated learning (FL) \cite{wu2025straggler, lin2025pinching, fang2025pinching}. In the context of ISAC, \cite{11122551} maximized the communication data rate while satisfying sensing performance requirements through PA position optimization, whereas \cite{li2025pinching} minimized the Cramér–Rao bound (CRB) under QoS constraints. For PAS-enabled MEC, \cite{11310324} maximized the computational capacity by jointly optimizing device transmit power, time allocation, and PA positions, while \cite{hu2025pass} improved task offloading efficiency and latency performance by exploiting PAS-assisted transmission. In the context of FL, \cite{wu2025straggler, lin2025pinching} employed PASs to mitigate the straggler problem, and \cite{fang2025pinching} investigated multiple FL scenarios in which PASs can enhance learning efficiency. \par

However, most of the aforementioned works overlook the effect of in-waveguide attenuation. Although this issue has been recognized in a limited number of studies, an efficient and practical solution is still lacking. In particular, \cite{xu2025pinching} acknowledged the impact of in-waveguide attenuation and showed that its effect can be negligible under certain conditions. However, this conclusion relies on the assumption of an ultra-low-attenuation waveguide, which requires advanced materials. How to effectively address in-waveguide attenuation in commonly used dielectric waveguides remains an open problem. In addition, a new PAS architecture, referred to as the segmented waveguide-enabled PAS, was proposed in \cite{ouyang2025uplink}. In this design, the waveguide is divided into multiple short dielectric segments, each equipped with a dedicated PA. While this architecture effectively reduces the in-waveguide propagation distance, it requires a larger number of RF chains and introduces additional hardware complexity. Moreover, the segmented structure imposes extra manufacturing challenges due to its increased structural and implementation complexity.

\subsection{Motivation and Contributions}
Although PASs offer a promising solution to combat large-scale fading, their performance is fundamentally constrained by in-waveguide attenuation, where signal power decays exponentially with the propagation distance inside the waveguide. This effect is non-negligible for commonly used dielectric materials and can significantly degrade the achievable data rate, particularly for users located far from the feed point. Most existing works either neglect this attenuation or rely on ultra-low-loss waveguides or segmented architectures, which impose stringent material requirements or incur high hardware and implementation complexity. These limitations motivate a simple and practical solution that can reduce the effective in-waveguide propagation distance without altering the basic PAS structure. This paper proposes a DF-PAS, where each waveguide is equipped with two feed points located at its two ends. By dynamically selecting the feed point according to user location, the DF-PAS shortens the effective in-waveguide propagation distance and thereby mitigates attenuation-induced power loss. This design preserves the original waveguide structure and PA operation mechanism, requiring only an additional feed point and simple switching control. As a result, the DF-PAS offers an effective and hardware-efficient means to improve PAS performance under practical in-waveguide attenuation. The main contributions of this paper are summarized as follows:
\begin{itemize}
    \item We demonstrate that in-waveguide attenuation is non-negligible in commonly used dielectric waveguides for PASs by characterizing its dependence on material loss tangent, effective refractive index, and operating frequency. This analysis reveals substantial power decay over practical waveguide lengths at $28$ GHz. Motivated by this observation, we propose a DF-PAS in which each waveguide is equipped with two feed points at its ends, enabling feed-point selection based on user location to shorten the effective in-waveguide propagation distance without changing the waveguide structure.
    \item For the single-waveguide DF-PAS, we conduct a performance analysis that explicitly accounts for in-waveguide attenuation and derive closed-form approximations of the ergodic rate under high-SNR conditions. The results show that the DF-PAS achieves a linear rate gain with respect to both the waveguide length and the attenuation coefficient compared with the single-fed PAS (SF-PAS). We further derive closed-form expressions for the optimal PA position and establish feed-point selection conditions for the time-division multiple access (TDMA) case.
    \item  We further extend the DF-PAS to multi-waveguide scenarios and conduct a comprehensive performance analysis. Then, we formulate a joint optimization problem over beamforming, PA positions, and feed-point selection for a more general orthogonal multiple access (OMA) case. To efficiently solve the resulting problem, a two-phase optimization framework is proposed by integrating greedy feed point switching, gradient-based PA placement, and WMMSE-based beamforming.
    \item Extensive simulations are provided for both single- and multi-waveguide scenarios to validate the proposed DF-PAS architecture. The results show that the DF-PAS consistently outperforms the SF-PAS and other benchmark schemes under different waveguide lengths, transmit power levels, and network configurations. These results demonstrate the effectiveness of the DF-PAS in mitigating in-waveguide attenuation and maintaining robust performance against free-space path loss.
\end{itemize}

\subsection{Organization}
The remainder of this paper is organized as follows. Section II introduces the DF-PAS and presents the system model and performance analysis for the single-waveguide case. Section III extends the proposed DF-PAS to multi-waveguide scenarios and formulates the corresponding joint optimization problem. Section IV provides numerical results to evaluate the performance of the proposed schemes. Finally, Section V concludes the paper.

\section{Dual-Fed Pinching Antenna System: Single Waveguide Case}
\subsection{In-waveguide Attenuation}
In-waveguide attenuation is the attenuation of electromagnetic energy during propagation within a waveguide. It depends on the waveguide geometry and material properties and constitutes an inherent and unavoidable physical phenomenon. The in-waveguide attenuation is commonly modeled as
\begin{equation}
P(z) = P_{\rm in} e^{-\alpha z}, \label{attenuation model}
\end{equation}
where $\alpha$ denotes the power attenuation coefficient of the waveguide, $P_{\rm in}$ is the input power injected at the feed point, and $z$ represents the propagation distance along the waveguide measured from the feed point. Eq. \eqref{attenuation model} illustrates that a larger value of $\alpha$ corresponds to more severe in-waveguide attenuation. \par
\begin{figure}[t]
     \centering
     \includegraphics[width=0.45\textwidth]{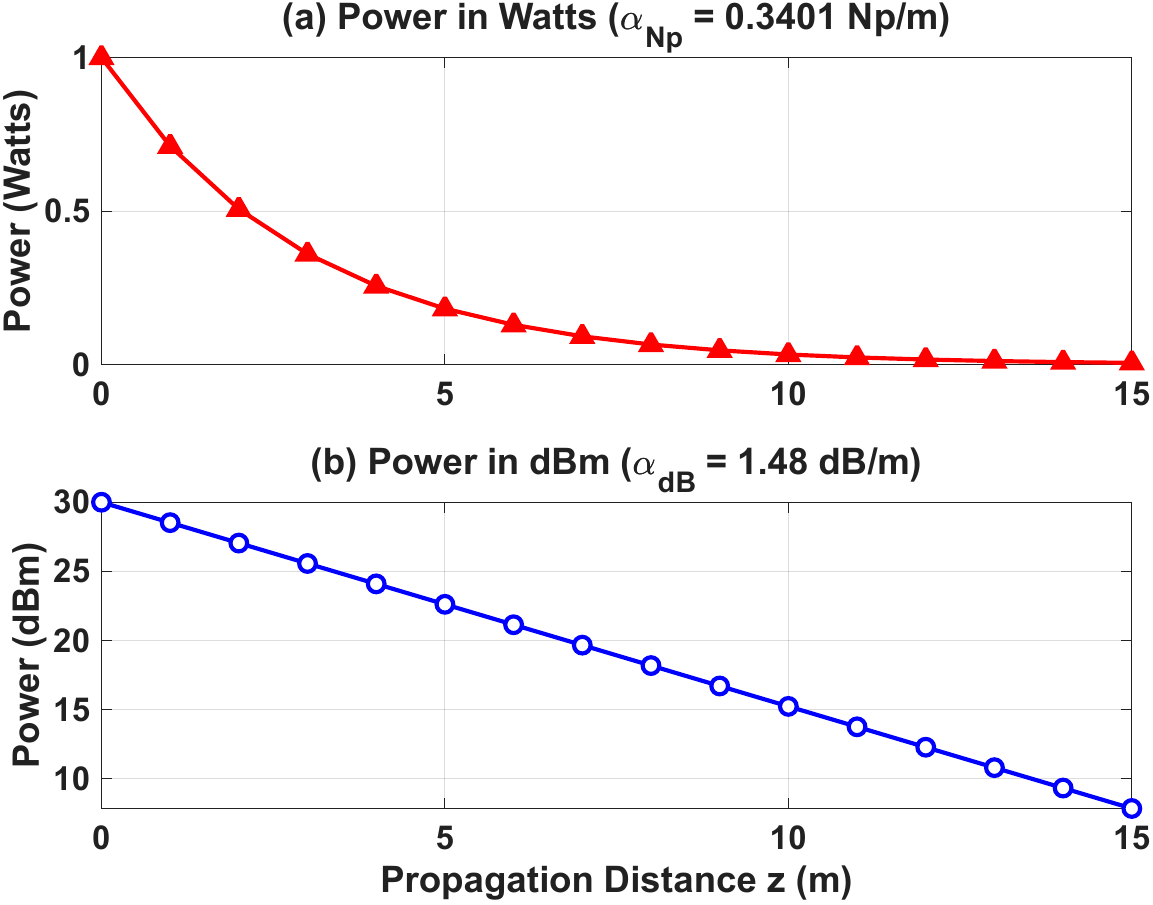}
     \caption{Power decay along a PTFE waveguide}
     \label{power decay}
\end{figure}

The pinching-antenna system employs dielectric waveguides to guide EM waves. Consequently, dielectric loss constitutes the dominant in-waveguide attenuation mechanism during electromagnetic propagation. By considering only dielectric loss, the power attenuation coefficient of a dielectric waveguide can be approximated as
\begin{equation}
\alpha_{d} \approx \frac{2 \pi \cdot n_{\rm eff} \tan(\delta)}{\lambda_0}, \label{dielectric attenuation coefficient}
\end{equation}
where $n_{\rm eff}$ denotes the effective refractive index of the guided mode, $\tan(\delta)$ is the dielectric loss tangent of the waveguide material, and $\lambda_0$ represents the free-space wavelength, given by $\lambda_0 = c/f_c$, with $c$ denoting the speed of light and $f_c$  denoting the carrier frequency. The most common real-world example of dielectric waveguides is a PTFE (Teflon) rod waveguide, whose loss tangent and refractive index are $0.0004$ and $1.45$, respectively. If we assume the frequency of EM wave is $28$ GHz, the power attenuation coefficient of PTFE is $\alpha_d = 1.48$ dB/m. Fig. \ref{power decay} illustrates the power attenuation in a PTFE waveguide, where the transmitted power decreases from $1$ W ($30$ dBm) at the input to $0.033$ W ($15$ dBm) after $10$ m of propagation. Consequently, users located far from the feed point experience severe in-waveguide attenuation, which may prevent their QoS requirements from being satisfied. These results indicate that in-waveguide attenuation is non-negligible in dielectric-waveguide-enabled pinching antenna systems.

\begin{figure}[t]
     \centering
     \includegraphics[width=0.5\textwidth]{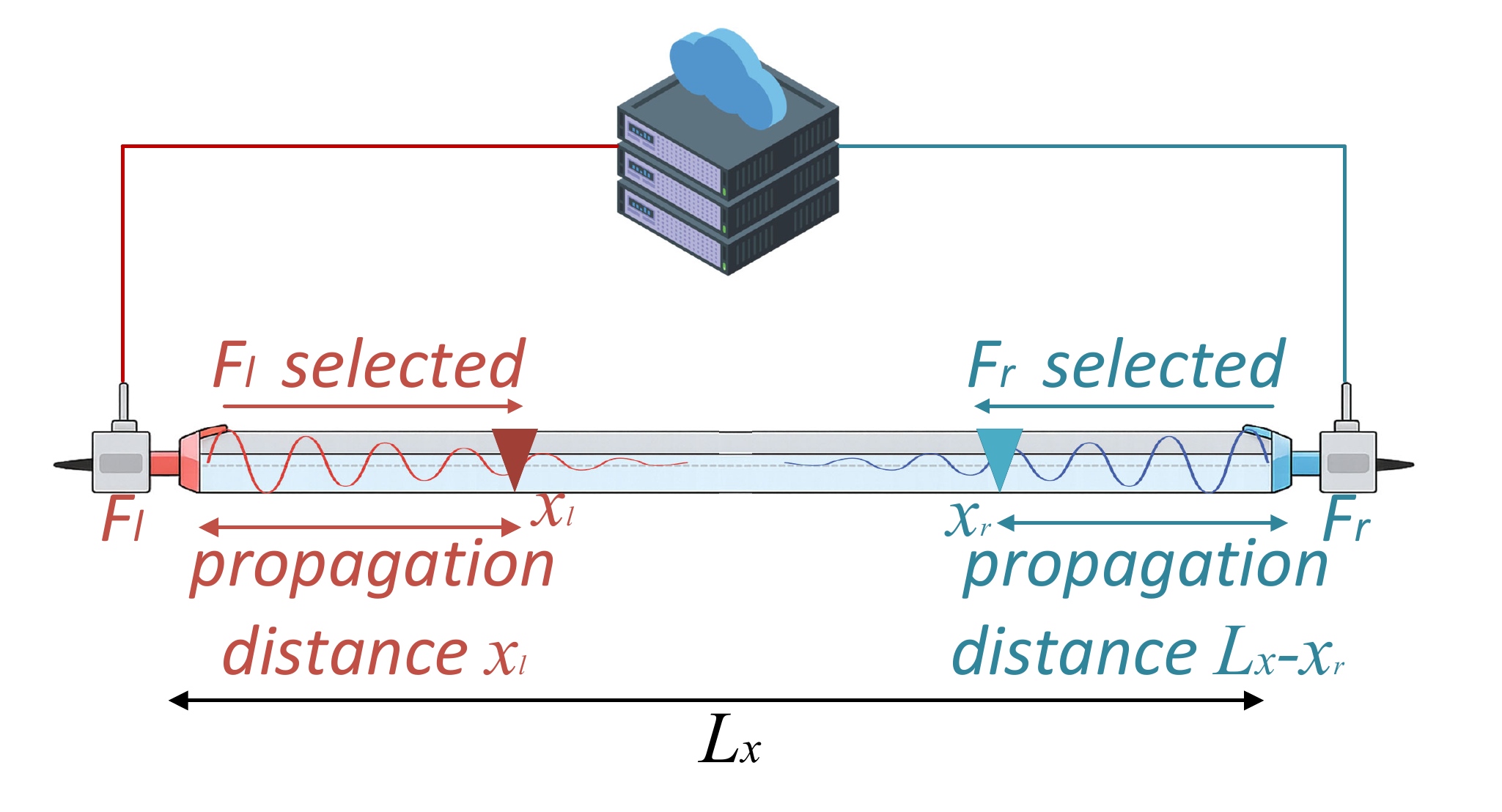}
     \caption{Architecture of the DF-PAS}
     \label{dual-fed}
\end{figure}

\subsection{Dual-Fed Pinching Antenna System}
The conventional waveguide usually has two ends and the EM waveguide is fed into it from one of the ends. As can be observed from Fig. \ref{power decay}, the user far from the feed point experiences severe in-waveguide attenuation. To mitigate this effect, we introduce a DF-PAS architecture shown in Fig. \ref{dual-fed}. The waveguide in a DF-PAS is equipped with two feed points at two ends of the waveguide, which are connected to the base station. The base station can dynamically select the feed point from either end of the waveguide according to the user location, thereby shortening the effective propagation distance within the waveguide. For example, users positioned closer to the left (right) end of the waveguide are served by injecting the EM signal from the left (right) feed point, which significantly reduces the in-waveguide propagation loss and improves the achievable received power. By enabling feed-point selection, the DF-PAS provides an additional degree of freedom for combating in-waveguide attenuation. \par

We adopt TDMA as a representative OMA scheme, where each user is assigned an orthogonal time slot to ensure interference-free transmission. Following \cite{ding2025analytical}, we assume that the PA location remains fixed within each transmission frame. However, physically repositioning the PA between two time slots would introduce excessive mechanical complexity and latency. Instead, an antenna activation mechanism \cite{wang2025antenna} can be used to rapidly switch/activate the desired PA on the waveguide, enabling low-latency reconfiguration (e.g., between frames) without continuous mechanical movement.

We consider a downlink system model where a single dual-fed dielectric waveguide equipped with one PA serves $M$ single-antenna users, denoted by ${\rm U}_m, 1 \leq m \leq M$. To accurately describe the geometric relationships in the pinching-antenna systems, we build a 3D Cartesian coordinate system. We assume that $M$ users are uniformly distributed over a rectangular service area of size $L_x\times L_y$, with locations denoted by $\psi_m = (x_m, y_m,0), 1 \leq m \leq M$. The waveguide of length $L_x$ is deployed parallel to the $x$-axis and located at the edge of the service area (i.e., at $y=0$) along the $y$-dimension at height $d$. The two ends of the waveguide serve as feed points. In each transmission interval, the signal is injected from either the left or the right feed point, but not both simultaneously. The left and right feed points are denoted by $F_l$ and $F_r$, with corresponding locations $\psi_l = (0, 0,d)$ and $\psi_r= (L_x, 0,d)$, respectively. Moreover, only one PA is deployed on the waveguide, whose location is denoted by $\psi^{\mathrm{Pin}} = (x^{\mathrm{Pin}}, 0, d)$. \par

Following the channel model in \cite{xu2025pinching3}, the channel from the PA to the user is modeled as the combination of an LoS component and an NLoS component. Specifically, a hybrid model that accounts for near-field spherical wave LoS propagation and cluster-based statistical fading in NLoS conditions is adopted. According to \cite{ding2025flexible}, the LoS channel from the PA to ${\rm U}_m$ is modeled as
\begin{equation}
    h^{\rm LoS}_m = \frac{\sqrt{\eta} e^{-j \frac{2 \pi}{\lambda} ||\psi^{\mathrm{Pin}} - \psi_m||}}{ ||\psi^{\mathrm{Pin}} - \psi_m||}, \label{LoS channel}
\end{equation}
where $\eta = \frac{c^2}{(4 \pi f_c)^2}$ is an LoS constant. In addition to the LoS component, the channel between the PA and ${\rm U}_m$ also includes an NLoS component due to multi-path scattering in the propagation environment, which can be modeled as
\begin{equation}
    h_m^{\rm NLoS} = \sum\limits_{k=1}^{N_c} g_{m,k} (x^{\mathrm{Pin}}), \label{NLoS channel}
\end{equation}
where $N_c$ denotes the number of scattering paths and $g_{m,k}$ denotes complex small-scale fading gain of the $k-$th path. $g_{m,k}$ can be modeled as an independent zero-mean circularly symmetric complex Gaussian random variable with distance-dependent variance \cite{xu2025pinching3}:
\begin{equation}
    g_{m,k} (x^{\mathrm{Pin}}) \sim \mathcal{CN} \left(0, \frac{\mu_{m,k}^2}{r_m^{2} \left(x^{\rm Pin} \right)} \right), \label{NLoS path gain}
\end{equation}
where $\mu_{m,k}^2$ represents the average power weight of the $k-$th scattering path and $r_m = ||\psi^{\mathrm{Pin}} - \psi_m||$ denotes the Euclidean distance between the PA and ${\rm U}_m$. Accordingly, the channel between the PA and ${\rm U}_m$ can be expressed as:
\begin{equation}
    h_m = h_m^{\rm LoS} + h_m^{\rm NLoS}. \label{channel model}
\end{equation} \par

Since the signal can only be injected from one feed point during each transmission, we introduce a binary variable $\xi \in \{0,1\}$ to indicate the feed-point selection. Specifically, $\xi = 1$ means the left feed point $F_l$ is active, while $\xi = 0$ means the right feed point $F_r$ is active. Let $s_m$ be the transmitted symbol for ${\rm U}_m$ with $\mathbb{E}\{|s_m|^2\} = 1$ and $P_0$ be the injected power. The signal sent from the PA can be expressed as follows:
\begin{align}
    \tilde{s}_m &= \xi \sqrt{P_0} e^{- \left(\frac{\alpha}{2} + j\frac{2 \pi }{\lambda_g}\right) ||\psi^{\mathrm{Pin}} - \psi_l|| }s_m \notag \\
    &+ (1-\xi) \sqrt{P_0} e^{- \left(\frac{\alpha}{2} + j\frac{2 \pi }{\lambda_g}\right) ||\psi^{\mathrm{Pin}} - \psi_r|| }s_m, \label{signal from PA}
\end{align}
where $\lambda_g = \frac{\lambda}{n_{\rm eff}}$ denotes the waveguide wavelength. $\lambda$ denotes the free-space wavelength. Based on the channel model \eqref{channel model}, the received signal of ${\rm U}_m$ during its allocated time slot can be expressed as:
\begin{equation}
    y_m = h_m \tilde{s}_m + n_m, \label{received signal}
\end{equation}
where $n_m \sim \mathcal{CN} (0, \sigma^2)$ denotes the additive white Gaussian noise (AWGN) with power $\sigma^2$.

\subsection{Performance Analysis}
In this subsection, we analyze the performance of the DF-PAS. The performance metric we adopted in this paper is the data rate. Since the PA is always activated at the closest point to the user, the PA–user distance is small. For short links, the LoS path is usually much stronger than scattered paths, i.e., $|h_m^{\rm LoS}|^2 \gg |h_m^{\rm NLoS}|^2$. Therefore, we ignore the NLoS component to obtain a tractable ergodic rate expression that captures the key effect of in-waveguide attenuation and distance-dependent LoS path loss. \par

Since the PA is placed closest to the user, the signal should be injected from the end of the waveguide closest to the PA. Specifically, if the PA is placed on in the left half, $F_l$ is active and the propagation distance is $z = x_m$. If the PA is placed in the right half, $F_r$ is active and the propagation distance is $z = L_x - x_m$. Therefore, the effective in-waveguide propagation distance $z$ is given by
\begin{equation}
    z = \min \left(x_m, L_x - x_m \right). \label{propagation distance}
\end{equation}
Then, \eqref{signal from PA} can be recast into
\begin{equation}
    \tilde{s}_m = \sqrt{P_0} e^{- \left(\frac{\alpha}{2} + j\frac{2 \pi }{\lambda_g}\right) \min \left(x_m, L_x - x_m \right)} s_m. \label{recast signal from PA}
\end{equation}
The received signal of ${\rm U}_m$ without NLoS component can be expressed as
\begin{equation}
    y_m = \sqrt{\frac{P_0\eta}{y_m^2 + d^2}}e^{\phi (x_m)} s_m + n_m, \label{recast received signal}
\end{equation}
where $\phi (x_m) = -\frac{\alpha}{2}  \min \left(x_m, L_x - x_m \right) -2\pi j \left(\frac{1}{\lambda}\sqrt{y_m^2 + d^2} + \frac{1}{\lambda_d} \min \left(x_m, L_x - x_m \right) \right) $ represents the distance-dependent phase. We note the $r_m = \sqrt{y_m^2 + d^2}$ in \eqref{recast received signal} due to $x^{\rm Pin} = x_m$. Therefore, the received signal-to-noise ratio (SNR) of ${\rm U}_m$ can be expressed as
\begin{equation}
    \gamma_m = \frac{P_0 \eta}{\sigma^2} \cdot \frac{e^{-\alpha  \min \left(x_m, L_x - x_m \right)} }{y_m^2 + d^2}. \label{SNR}
\end{equation}
The ergodic rate is defined as the expectation of the data rate, which is given by
\begin{equation}
    \bar{R}^{\rm DF}_m = \mathbb{E}_{x,y} \left[\log_2 (1+ \gamma_m) \right]. \label{e rate}
\end{equation}
To obtain a tractable expression, we use the high-SNR approximation, where $\log_2 (1 + \gamma_m) \approx \log_2(\gamma_m)$ when $\gamma_m \gg 1$. Then, the ergodic rate can be approximated as
\begin{align}
    &\bar{R}^{\rm DF}_m \approx \mathbb{E}_{x,y} \left[\log_2 \left(\frac{P_0 \eta}{\sigma^2} \cdot \frac{e^{-\alpha  \min \left(x_m, L_x - x_m \right)} }{y_m^2 + d^2} \right) \right] \\
    & = \int_0^{L_x} \int_0^{L_y} \log_2 \left(\frac{P_0 \eta}{\sigma^2} \cdot \frac{e^{-\alpha  \min \left(x, L_x - x \right)} }{y^2 + d^2} \right) \frac{1}{L_x} \frac{1}{L_y} \,dx\,dy. \label{int e rate}
\end{align}
\begin{lem} \label{lemma 1}
    At high SNR, the ergodic rate of ${\rm U}_m$ can be approximated as
    \begin{align}
        \bar{R}^{\rm DF}_m &\approx \log_2 \left(\frac{P_0 \eta}{\sigma^2} \right) - \frac{\alpha L_x}{4} \log_2(e) \notag \\
         &-\left[\log_2 \left(L_y^2 + d^2 \right) - \frac{2}{\ln2} \left(1-\frac{d}{L_y} \arctan \left(\frac{L_y}{d}\right) \right) \right]. \label{closed form e rate}
    \end{align}
\end{lem}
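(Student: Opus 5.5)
Starting from the double-integral form \eqref{int e rate}, the logarithm of the SNR product splits into three separable terms:
\begin{equation*}
\log_2\!\Big(\tfrac{P_0\eta}{\sigma^2}\Big) - \alpha\min(x,L_x-x)\log_2(e) - \log_2(y^2+d^2).
\end{equation*}
Since the joint density $\frac{1}{L_xL_y}$ is a product of uniform densities, each term can be averaged over its own variable alone. The constant term survives unchanged, giving $\log_2(P_0\eta/\sigma^2)$.

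For the attenuation term, I will compute $\frac{1}{L_x}\int_0^{L_x}\min(x,L_x-x)\,dx$. I split the interval at $L_x/2$ and use the symmetry $x\mapsto L_x-x$, so the integral equals $\frac{2}{L_x}\int_0^{L_x/2}x\,dx=\frac{L_x}{4}$. This yields the $-\frac{\alpha L_x}{4}\log_2(e)$ term. I will also note for later comparison that the single-fed case would give $L_x/2$ here, which is the source of the claimed linear gain.

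For the path-loss term, I need $\frac{1}{L_y}\int_0^{L_y}\ln(y^2+d^2)\,dy$, which I then divide by $\ln 2$. I will integrate by parts with $u=\ln(y^2+d^2)$ and $dv=dy$:
\begin{equation*}
\int_0^{L_y}\ln(y^2+d^2)\,dy = L_y\ln(L_y^2+d^2) - \int_0^{L_y}\frac{2y^2}{y^2+d^2}\,dy .
\end{equation*}
Writing $\frac{y^2}{y^2+d^2}=1-\frac{d^2}{y^2+d^2}$, the remaining integral evaluates to $2L_y-2d\arctan(L_y/d)$. Dividing by $L_y$ and converting to base 2 gives
\begin{equation*}
\log_2(L_y^2+d^2)-\frac{2}{\ln 2}\Big(1-\frac{d}{L_y}\arctan\frac{L_y}{d}\Big),
\end{equation*}
which appears with a minus sign in \eqref{closed form e rate}.

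The only mildly delicate point is this integration by parts, in particular keeping track of the base conversion and the sign of the arctan term. Everything else is routine. The approximation itself rests only on the high-SNR step $\log_2(1+\gamma)\approx\log_2\gamma$ already made before the lemma; no further estimate is needed, and the three pieces combine exactly.
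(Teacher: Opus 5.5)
Your proposal is correct and follows essentially the same route as the paper's proof: it splits the integrand into the same three terms ($I_1$, $I_2$, $I_3$), splits $\int_0^{L_x}\min(x,L_x-x)\,dx$ at $L_x/2$, and evaluates $\frac{1}{L_y\ln 2}\int_0^{L_y}\ln(y^2+d^2)\,dy$. The only difference is that you derive the antiderivative $y\ln(y^2+d^2)-2y+2d\arctan(y/d)$ by integration by parts, whereas the paper quotes it as a standard integral.
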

\begin{proof}
    See Appendix \ref{appendox A}
\end{proof}

\begin{lem} \label{lemma 2}
    At high SNR, the ergodic rate gain of ${\rm U}_m$ achieved by the DF-PAS compared with the SF-PAS is 
    \begin{equation}
        \Delta \bar{R} = \frac{\alpha L_x}{4} \log_2(e). \label{e rate gain}
    \end{equation}
\end{lem}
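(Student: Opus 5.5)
The plan is to compute the SF-PAS counterpart of Lemma~\ref{lemma 1} in the same way, and then subtract the two high-SNR expressions. In the SF-PAS the signal is always injected from one end, say $F_l$. The PA is again placed at $x^{\rm Pin}=x_m$, so the in-waveguide propagation distance is $z=x_m$ instead of $\min(x_m,L_x-x_m)$. The SNR is therefore
\begin{equation}
\gamma_m^{\rm SF}=\frac{P_0\eta}{\sigma^2}\cdot\frac{e^{-\alpha x_m}}{y_m^2+d^2},
\end{equation}
and the high-SNR ergodic rate $\bar R_m^{\rm SF}$ is given by the same double integral as \eqref{int e rate}, with $\min(x,L_x-x)$ replaced by $x$.

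Because the logarithm separates the product, both ergodic rates split into three additive terms:
\begin{itemize}
\item the constant $\log_2(P_0\eta/\sigma^2)$;
\item an $x$-only attenuation term, $-\frac{\log_2 e}{L_x}\int_0^{L_x}\alpha z(x)\,dx$;
\item a $y$-only term, $-\frac{1}{L_y}\int_0^{L_y}\log_2(y^2+d^2)\,dy$.
\end{itemize}
The first and third terms are identical for DF-PAS and SF-PAS, since the $y$-integral does not depend on the feed point. They cancel in $\Delta\bar R=\bar R_m^{\rm DF}-\bar R_m^{\rm SF}$, so the bracketed arctan term of Lemma~\ref{lemma 1} never has to be recomputed.

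Only the two attenuation averages remain. For SF-PAS, $\frac{1}{L_x}\int_0^{L_x}\alpha x\,dx=\frac{\alpha L_x}{2}$. For DF-PAS, split the integral at $L_x/2$ and use symmetry:
\begin{equation}
\frac{1}{L_x}\int_0^{L_x}\alpha\min(x,L_x-x)\,dx=\frac{2\alpha}{L_x}\int_0^{L_x/2}x\,dx=\frac{\alpha L_x}{4},
\end{equation}
which matches the term in \eqref{closed form e rate}. Hence
\begin{equation}
\Delta\bar R=\left(\frac{\alpha L_x}{2}-\frac{\alpha L_x}{4}\right)\log_2 e=\frac{\alpha L_x}{4}\log_2 e,
\end{equation}
which is \eqref{e rate gain}. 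If the SF-PAS feed were instead at $F_r$, the change of variable $x\mapsto L_x-x$ gives the same average, so the result does not depend on which end is used.

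The computation is elementary. The main point needing care is the modeling assumption for SF-PAS: the comparison is only fair if SF-PAS uses the same PA placement ($x^{\rm Pin}=x_m$) and the same high-SNR approximation. I would state this explicitly, since the cancellation of the $y$-dependent and constant terms relies on it.
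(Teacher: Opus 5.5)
Your proposal is correct and follows essentially the same route as the paper: the SF-PAS rate shares the constant and $y$-dependent terms ($I_1$ and $I_3$) with the DF-PAS, so the gain comes only from the attenuation averages $\frac{\alpha L_x}{2}\log_2(e)$ versus $\frac{\alpha L_x}{4}\log_2(e)$. Your remarks that feeding from $F_r$ gives the same average, and that the comparison assumes the same PA placement $x^{\rm Pin}=x_m$ and the same high-SNR approximation, make explicit what the paper leaves implicit.
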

\begin{proof}
    See Appendix \ref{appendox B}
\end{proof}
\begin{rem}
    The performance advantage of the DF-PAS scales linearly with both the waveguide length $L_x$ and the attenuation coefficient $\alpha$. This indicates that the DF-PAS is particularly beneficial for scenarios involving long-distance transmission or high-loss waveguide materials.
\end{rem}
Since TDMA is adopted, the transmission frame is equally allocated to each user. Therefore, the ergodic sum rate of $M$ users can be expressed as
\begin{equation}
    \bar{R}^{\rm DF} _{\rm sum} = \frac{1}{M} \sum_{m=1}^M \bar{R}^{\rm DF}_m. \label{ergodic sum rate}
\end{equation}
\vspace{-1cm}
\subsection{Optimization}
In this subsection, we jointly optimize the feed-point selection and the PA position by considering both LoS and NLoS components. Note that placing the PA closest to the user may not be optimal, because a longer in-waveguide path can cause strong attenuation. Therefore, the PA position should balance the in-waveguide loss and the free-space path loss. \par
\subsubsection{Optimal PA Position}
Since the cases with signal injection from $F_l$ or $F_r$ are symmetric, it is enough to solve one case. Once the optimal PA position for one feed point is found, the optimal position for the other feed point is its mirror point with respect to the waveguide center. Without loss of generality, we assume the signal is injected from $F_l$, which means $\xi = 1$. In this case, the problem is to find the optimal PA position under in-waveguide attenuation, while considering both LoS and NLoS components. \par
When $\xi = 1$, the received signal of ${\rm U}_m$ is
\begin{equation}
    y_m = \sqrt{P_0} e^{- \left(\frac{\alpha}{2} + j\frac{2 \pi }{\lambda_g}\right) x^{\rm Pin}} h_m s_m + n_m. \label{ym SISO op}
\end{equation}
The derivation of the SNR of the received signal is given by
\begin{align}
    \gamma_m &= \frac{P_0 e^{-\alpha x^{\rm Pin}} \mathbb{E} \{|h_m|^2\}}{\sigma^2} \notag \\
    &=\frac{P_0}{\sigma^2} \left(\frac{\eta\;e^{-\alpha x^{\rm Pin}}}{(x^{\rm Pin} - x_m)^2 + D_m}  + e^{-\alpha x^{\rm Pin}} \mathbb{E} \{|h^{\rm NLoS}_m|^2\}\right) \notag \\
    & \overset{(a)}{=} \frac{P_0}{\sigma^2} \left(\frac{\eta\;e^{-\alpha x^{\rm Pin}}}{(x^{\rm Pin} - x_m)^2 + D_m}  + \frac{\mu_m\;e^{-\alpha x^{\rm Pin}}}{(x^{\rm Pin} - x_m)^2 + D_m}\right) \notag \\
    & = \frac{P_0 \eta_m}{\sigma^2} \cdot \frac{e^{-\alpha x^{\rm Pin}}}{(x^{\rm Pin} - x_m)^2 + D_m}, \label{snr SISO op}
\end{align}
where $D_m = y_m^2 + d^2$ denotes a constant of PA-User distance, $u_m = \sum_{k=1}^{N_c} \mu_{m,k}^2$ aggregates the effective NLoS powers and $\eta_m = \eta + \mu_m$ denotes the effective channel coefficient of ${\rm U}_m$. $(a)$ is obtained by the assumption \eqref{NLoS path gain}. We note that maximizing the data rate of ${\rm U}_m$ is equivalent to maximizing $\gamma_m$. Hence, the optimization problem can be formulated as
\begin{subequations}\label{Prob1}
    \begin{align}
        {\rm P1}: &\max_{\{x^{\rm Pin}\}}\;\; \gamma_m \label{P10}\\
		&~\mathrm{s.t.}\;\;\;\; 0 \leq x^{\rm Pin} \leq L_x. \label{P11}
    \end{align}
\end{subequations}
We note that taking the NLoS channel component into consideration does not change the form of $\gamma_m$. Therefore, the optimal solution of ${\rm P1}$ can be solved by the method proposed in \cite{xu2025pinching2}. To save space, we omit the details here. Let $x_m^{l *}$ be the optimal PA position for ${\rm U}_m$ when $F_l$ is selected. $x_m^{l *}$ is given by
\begin{equation}
     x_m^{l *} = \begin{cases}
        0, & \mbox{if}~~D_m \geq \frac{x_m(2 - \alpha x_m)}{\alpha} \\
        x_m +\frac{-1 + \sqrt{1-\alpha^2 D_m}}{\alpha}, & \textit{otherwise}
    \end{cases}. \label{optimal PA position of left}
\end{equation}
According to the symmetry, when $F_r$ is selected, the optimal PA position for ${\rm U}_m$ is given by
\begin{equation}
     x_m^{r *} = \begin{cases}
        L_x, & \mbox{if}~~D_m \geq \frac{\Tilde{x}_m(2 - \alpha \Tilde{x}_m)}{\alpha} \\
        x_m -\frac{-1 + \sqrt{1-\alpha^2 D_m}}{\alpha}, & \textit{otherwise}
    \end{cases}, \label{optimal PA position of left}
\end{equation}
where $\Tilde{x}_m = L_x - x_m$.
\subsubsection{Feed-Point Selection}
There are two options for feed-point selection. First, the feed point can be adjusted for each user. This can choose the better feed point based on the user location, but it requires switching the feed point between time slots and increases hardware complexity. Second, the feed point can be fixed for all users. This is more practical, since once the feed point is selected, it stays unchanged during the whole transmission. \par

We start with the first option where each user can select its own feed point. By considering both feed points $F_l$ and $F_r$, the SNR of ${\rm U}_m$ can be expressed as
\begin{equation}
    \gamma_m(\xi) = \frac{P_0 \eta_m}{\sigma^2} \cdot \frac{e^{-\alpha f(\xi)}}{(x^{\rm Pin} - x_m)^2 + D_m}, \label{feed point snr }
\end{equation}
where $f(\xi) = \xi x^{\rm Pin} + (1-\xi)(L_x - x^{\rm Pin})$ defined as a $\xi$-dependent function. We note that $\gamma_m(\xi)$ decreases as $f(\xi)$ increases. Therefore, maximizing $\gamma_m(\xi)$ with respect to $\xi$ is equivalent to minimizing $f(\xi)$. Let $\xi_m^*$ denote the optimal feed point indicator for ${\rm U_m}$, which is given by
\begin{align}
    \xi_m^* &= \arg \min\limits_{\xi \in \{0,1\}} f(\xi) \notag \\
    &=\begin{cases}
1, & x^{\rm Pin}\le \dfrac{L_x}{2},  \\
0, & x^{\rm Pin}>\dfrac{L_x}{2}.
\end{cases}
\end{align}
when $\xi^*_m = 1$, the left feed point $F_l$ is selected for ${\rm U}_m$; otherwise, the right feed point $F_r$ is selected. \par
For the second option, the feed point is selected before the transmission based on all users' locations and does not change during the transmission. 
Let $R_{\rm sum}^l$ and $R_{\rm sum}^r$ denote the sum rate of all users when $F_l$ and $F_r$ are selected, respectively. $R_{\rm sum}^l$ and $R_{\rm sum}^r$ can be expressed as
\begin{equation}
    R_{\rm sum}^l = \frac{1}{M}\sum\limits_{m=1}^M \log_2 \left(1 + \frac{P_0 \eta_m}{\sigma^2} \frac{e^{-\alpha x^{l*}_m}}{\left(x^{l*}_m - x_m \right)^2 + D_m} \right), \label{sum rate of left point}
\end{equation}
and
\begin{equation}
    R_{\rm sum}^r = \frac{1}{M}\sum\limits_{m=1}^M \log_2 \left(1 + \frac{P_0 \eta_m}{\sigma^2} \frac{e^{-\alpha (L_x - x^{r*}_m)}}{\left(x^{r*}_m - x_m \right)^2 + D_m} \right). \label{sum rate of right point}
\end{equation}
The optimal feed point indicator is given by
\begin{equation}
    \xi^*  =\begin{cases}
1, & R_{\rm sum}^l \geq R_{\rm sum}^r,  \\
0, & R_{\rm sum}^l < R_{\rm sum}^r.
\end{cases}
\end{equation}
\section{Dual-Fed Pinching Antenna System: Multiple Waveguides Case}
In this section, we extend the DF-PAS to the multi-waveguide scenario. Unlike the single-waveguide setup, multiple parallel waveguides are deployed to cover a wider service area, and each user is jointly served by all waveguides.
\subsection{Multi-waveguide Dual-Fed Pinching Antenna System}
We consider $N$ parallel waveguides deployed along the $x$-axis and equally spaced over the service area of width $L_y$. The $y$- coordinate of the $n$-th waveguide is denoted by $y_n$ and is given by $y_n = \frac{2n-1}{2N} L_y, n=1,...,N$. Each waveguide is equipped with one PA. The location of the PA on the $n$-th waveguide is denoted by $\psi_n^{\rm Pin} = (x^{\rm Pin}_n, y_n,d)$. Moreover, each waveguide has two feed points at its two ends, denoted by $F_{l,n}$ and $F_{r,n}$, with corresponding locations $\psi_n^l = (0,y_n,d)$ and  $\psi_n^r = (L_x,y_n,d)$. The injected power for each waveguide is denoted by $P_n$. One feasible approach in this case is to serve the users individually. Specifically, as in Section II, we can employ TDMA, where each user is assigned a dedicated time slot. However, this scheme incurs a performance loss because it does not fully exploit the spatial degrees of freedom enabled by multiple pinching antennas. In this subsection, we therefore consider a scheme in which all users are simultaneously served by the multiple pinching antennas.\par

The channel between the PA on the $n$-th waveguide and ${\rm U}_m$ can be expressed as
\begin{equation}
    h_{n,m} = h_{n,m}^{\rm LoS} + h_{n,m}^{\rm NLoS},
\end{equation}
where
\begin{equation}
    h^{\rm LoS}_{n,m} = \frac{\sqrt{\eta} e^{-j \frac{2 \pi}{\lambda} ||\psi^{\mathrm{Pin}}_n - \psi_m||}}{ ||\psi^{\mathrm{Pin}}_n - \psi_m||}. \label{LoS channel multi-waveguide}
\end{equation}
and
\begin{equation}
      h_{n,m}^{\rm NLoS} = \sum\limits_{k=1}^{N_{n,m}}\Gamma_k \frac{e^{-j \frac{2\pi}{\lambda} d_{n,k}^{1}(x_n^{\rm Pin})}}{d_{n,k}^{1}(x_n^{\rm Pin})} \cdot \frac{e^{-j \frac{2\pi}{\lambda} d_{m,k}^{2}}}{d_{m,k}^{2}}, \label{NLoS channel multi-waveguide}
\end{equation}
where $N_{n,m}$ denotes the number of scattering paths between the PA on the $n-$th waveguide and ${\rm U}_m$, $\Gamma_k$ denotes the reflective coefficient of the $k-$th scatter, $d_{n,k}^1$ denotes the distance between the PA on the $n-$th waveguide and the $k-$th scatter, and $d_{m,k}^2$ denotes the distance between ${\rm U}_m$ and the $k-$th scatter. Let $\psi_k^{\rm SC} = (x_k^{\rm SC}, y_k^{\rm SC}, 0)$ denote the position of the $k-$th scatter. Then, $d_{n,k}^1$ and $d_{m,k}^2$ are given by $d_{n,k}^1 = |\psi_n^{\rm Pin} - \psi_k^{\rm SC}|$ and $d_{m,k}^2 = |\psi_m - \psi_k^{\rm SC}|$, respectively. \par

Note that all signals injected into the same waveguide must be identical. Therefore, the superposition of all users' signals is fed into each waveguide. The signal fed into the $n$-th waveguide is given by
\begin{equation}
    s_n = \sum_{m=1}^M p_{n,m} s_m, \label{superposition signal}
\end{equation}
where $p_{n,m}$ denotes the beamforming coefficient assigned to ${\rm U}_m$ on the $n$-th waveguide, and $s_m$ denotes the information symbol intended for ${\rm U}_m$. After considering in-waveguide attenuation and feed-point selection, the signal transmitted from the PA on the $n-$th waveguide can be expressed as
\begin{align}
    x_n &= \xi_n  e^{- \left(\frac{\alpha}{2} + j\frac{2 \pi }{\lambda_g}\right) ||\psi^{\mathrm{Pin}}_n - \psi_n^l|| }s_n \notag \\
    &+ (1-\xi_n) e^{- \left(\frac{\alpha}{2} + j\frac{2 \pi }{\lambda_g}\right) ||\psi^{\mathrm{Pin}}_n - \psi_n^r|| }s_n, \label{signal from PA on n waveguide}
\end{align}
where $\xi_n$ denotes the feed-point selection indicator of the $n$-th waveguide. Let $\nu_n = \xi_n e^{-(\frac{\alpha}{2} + j \frac{2\pi}{\lambda_g}) \|\psi_n^{\rm Pin} - \psi_n^l\|} + (1 - \xi_n) e^{-(\frac{\alpha}{2} + j \frac{2\pi}{\lambda_g}) \|\psi_n^{\rm Pin} - \psi_n^r\|}$ denote the equivalent in-waveguide response. The effective composite channel between the input of the $n$-th waveguide and ${\rm U}_m$ can be expressed as $\tilde{h}_{n,m} = h_{n,m} \nu_n$. Since ${\rm U}_m$ receives signals from all waveguides, the received signal at ${\rm U}_m$ can be expressed as
\begin{equation}
    y_m = \sum_{n=1}^N \tilde{h}_{n,m} p_{n,m} s_m + \sum_{i \neq m} \sum_{n=1}^N \tilde{h}_{n,m} p_{n,i} s_i + n_m. \label{eq:received_signal}
\end{equation}
We assume that the symbols satisfy $\mathbb{E}[|s_m|^2] = 1, \forall m$. The signal-to-interference-plus-noise ratio (SINR) is expressed as
\begin{equation}
    \gamma_m = \frac{|\tilde{\mathbf{h}}_m^H \mathbf{p}_m|^2}{\sum_{i \neq m} |\tilde{\mathbf{h}}_m^H \mathbf{p}_i|^2 + \sigma^2 }, \label{SINR for user m}
\end{equation}
where $\tilde{\mathbf{h}}_m = [\tilde{h}_{1,m}, \cdots, \tilde{h}_{N,m}]^T$ is the effective channel vector, and $\mathbf{p}_m = [p_{1,m}, \cdots, p_{N,m}]^T$ is the beamforming vector.

\subsection{Performance Analysis}
In this subsection, we analyze the performance of the DF-PAS with multiple waveguides. To simplify the analysis, we consider the single user case (${\rm U}_m$) and the LoS only channel. The effective composite channel $\tilde{h}_{n,m}$ becomes the LoS channel $h_{n,m}^{\rm LoS}$, given by \eqref{LoS channel multi-waveguide}. We further assume each PA is placed closest to ${\rm U}_m$; therefore, all PAs share the same effective in-waveguide propagation distance, which is given by \eqref{propagation distance}. We assume each waveguide has the equal injected power, which means $P_n = \frac{P_0}{N}, \forall n$. The beamforming adopts maximum ratio transmission (MRT). Hence, \eqref{SINR for user m} can be recast into
\begin{equation}
\gamma_m=\frac{P_0\eta}{N\sigma^2} e^{-\alpha z}\left(\sum_{n=1}^{N}\frac{1}{\sqrt{(y_m-y_n)^2+d^2}}\right)^2.
\end{equation}
To obtain a tractable expression, we apply the high-SINR approximation again here. The ergodic rate can be approximated as \eqref{ergiduc rate multi waveguide} at the top of the next page.
\begin{figure*}[t]
    \begin{align}
   \bar{R}_m^{\rm DF} &\approx \int_0^{L_x} \int_0^{L_y} \log_2 \left(\frac{P_0\eta}{N\sigma^2} e^{-\alpha  \min \left(x, L_x - x \right)}\left(\sum_{n=1}^{N}\frac{1}{\sqrt{(y_m-y_n)^2+d^2}}\right)^2 \right) \frac{1}{L_x} \frac{1}{L_y} \,dx\,dy \\
   & = \log_2 \left(\frac{P_0\eta}{N\sigma^2}\right) - \frac{\alpha}{\ln 2} \frac{1}{L_x}\int_0^{L_x} \min \left(x, L_x - x \right)\,dx + \frac{2}{L_y} \int_0^{L_y} \log_2 \left(\sum_{n=1}^{N}\frac{1}{\sqrt{(y-y_n)^2+d^2}} \right) \, dy \label{ergiduc rate multi waveguide}. 
   \end{align}
\end{figure*}
\begin{lem} \label{lemma 3}
    At high SNR, the ergodic rate of ${\rm U}_m$ with multiple waveguides can be approximated as
    \begin{align}
        \bar{R}^{\rm DF}_m &\approx \log_2 \left(\frac{P_0 \eta}{N\sigma^2} \right) - \frac{\alpha L_x}{4} \log_2(e) \notag \\
         &+2 \log_2 \left(\frac{1}{L_y} \sum_{n=1}^{N} \left[ \operatorname{asinh}\left( \frac{L_y - y_n}{d} \right) + \operatorname{asinh}\left( \frac{y_n}{d} \right) \right] \right). \label{closed form e rate multi waveguide}
    \end{align}
\end{lem}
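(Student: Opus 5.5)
The plan starts from the decomposition \eqref{ergiduc rate multi waveguide}, which separates the high-SINR ergodic rate into three parts: a constant, an attenuation term depending only on $x$, and an array-gain term depending only on $y$. I would handle each part separately and then add the results.

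The first term $\log_2\left(\frac{P_0\eta}{N\sigma^2}\right)$ needs no work. For the attenuation term I would reuse the computation from Lemma~\ref{lemma 1}. Splitting the integral at $L_x/2$ gives
\begin{equation*}
\int_0^{L_x}\min(x,L_x-x)\,dx = 2\int_0^{L_x/2}x\,dx = \frac{L_x^2}{4}.
\end{equation*}
Hence
\begin{equation*}
\frac{\alpha}{\ln 2}\cdot\frac{1}{L_x}\cdot\frac{L_x^2}{4} = \frac{\alpha L_x}{4}\log_2(e),
\end{equation*}
which is exactly the second term of \eqref{closed form e rate multi waveguide}. This term is the same as in the single-waveguide case, because every PA sees the same in-waveguide distance $z$.

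The third term is the main obstacle. The quantity $\frac{2}{L_y}\int_0^{L_y}\log_2 S(y)\,dy$, with $S(y)=\sum_{n=1}^N\left((y-y_n)^2+d^2\right)^{-1/2}$, has no elementary closed form, since the logarithm acts on a sum of inverse square roots. The target expression indicates the intended route: interchange the average over $y$ and the logarithm, i.e., use
\begin{equation*}
\frac{1}{L_y}\int_0^{L_y}\log_2 S(y)\,dy \approx \log_2\left(\frac{1}{L_y}\int_0^{L_y}S(y)\,dy\right).
\end{equation*}
By Jensen's inequality for the concave $\log_2$, the right-hand side is an upper bound on the left-hand side. The bound is tight when $S(y)$ varies little over $[0,L_y]$. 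This happens when $N$ is large, so the PAs densely cover the width, or when $d$ is comparable to or larger than the waveguide spacing $L_y/N$. I would state this explicitly as the approximation being made, in the same spirit as the high-SNR step. If a sharper justification were needed, I would bound the relative variation $\max S/\min S$ in terms of $L_y/(Nd)$.

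After the interchange, the remaining integral is elementary term by term. Using the antiderivative $\int \frac{dy}{\sqrt{(y-y_n)^2+d^2}} = \operatorname{asinh}\left(\frac{y-y_n}{d}\right)$ and the oddness of $\operatorname{asinh}$, I get
\begin{equation*}
\int_0^{L_y}\frac{dy}{\sqrt{(y-y_n)^2+d^2}} = \operatorname{asinh}\left(\frac{L_y-y_n}{d}\right)+\operatorname{asinh}\left(\frac{y_n}{d}\right).
\end{equation*}
Summing over $n$, dividing by $L_y$, taking $\log_2$ and multiplying by $2$ gives the third term of \eqref{closed form e rate multi waveguide}. Adding the three pieces completes the proof.
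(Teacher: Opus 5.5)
Your proposal matches the paper's proof essentially step for step: the attenuation term reuses the $\min(x,L_x-x)$ integral from Lemma~\ref{lemma 1}, the $y$-term is handled by Jensen's inequality with the upper bound adopted as the approximation, and the remaining integrals are evaluated via the $\operatorname{asinh}$ antiderivative and its oddness. One caveat applies to the tightness heuristic you share with the paper: for large $N$, $S(y)\approx\frac{N}{L_y}\bigl[\operatorname{asinh}\bigl(\frac{L_y-y}{d}\bigr)+\operatorname{asinh}\bigl(\frac{y}{d}\bigr)\bigr]$, whose center-to-edge ratio depends only on $L_y/d$, so a small $L_y/(Nd)$ removes only the discreteness ripple, and your proposed bound on $\max S/\min S$ in terms of $L_y/(Nd)$ alone could not make the Jensen gap vanish (the gap stays small in practice because that ratio is below $2$ and $\log_2$ is flat over such a range).
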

\begin{proof}
    See Appendix \ref{appendox C}
\end{proof}

\subsection{Optimization}
In this subsection, we aim to maximize the sum rate of all users by jointly optimizing the feed-point selection, the positions of PAs, and the beamforming vectors. 
\subsubsection{Problem Formulation}

Let $\mathbf{P} = \{ \mathbf{p}_1, \dots, \mathbf{p}_M \}$ denote the collection of beamforming vectors. Let $\boldsymbol{\xi} = [\xi_1, \dots, \xi_N]^T$ represent the binary feed selection vector, and $\mathbf{x} = [x_1^{{\rm Pin}}, \dots, x_N^{{\rm Pin}}]^T$ denote the positions of the PAs along the waveguides. The achievable rate for ${\rm U}_m$ is given by $R_m = \log_2(1 + \gamma_m)$. The sum-rate maximization problem is formulated as follows:

\begin{subequations}
\begin{align}
    {\rm P}2: \quad & \underset{\mathbf{P}, \boldsymbol{\xi}, \mathbf{x}}{\max} \quad \sum_{m=1}^M \log_2 \left( 1 + \frac{|\tilde{\mathbf{h}}_m(\boldsymbol{\xi}, \mathbf{x})^H \mathbf{p}_m|^2}{\sum_{i \neq m} |\tilde{\mathbf{h}}_m(\boldsymbol{\xi}, \mathbf{x})^H \mathbf{p}_i|^2 + \sigma^2} \right) \label{prob_obj} \\
    \text{s.t.} \quad & \sum_{m=1}^M ||\mathbf{p}_m||_2^2 \le P_0, \label{prob_power} \\
    & \xi_n \in \{0, 1\}, \quad \forall n = 1, \dots, N, \label{prob_binary} \\
    & 0 \le x_n^{\text{Pin}} \le L_x, \quad \forall n = 1, \dots, N. \label{prob_pos}
\end{align}
\end{subequations}
Constraint \eqref{prob_power} ensures that the total power does not exceed its power budget $P_0$. Constraint \eqref{prob_binary} imposes the binary restriction on the feed-point selection, where $\xi_n=1$ selects the left feed and $\xi_n=0$ selects the right feed. Finally, constraint \eqref{prob_pos} ensures that the PA is physically placed on the waveguide. \par

The formulated problem ${\rm P}1$ represents a mixed-integer non-convex optimization problem due to the binary constraints on $\boldsymbol{\xi}$, the coupling of beamforming vectors in the objective function, and the highly non-linear dependence of the channel vectors $\tilde{\mathbf{h}}_m(\boldsymbol{\xi}, \mathbf{x})$ on the position and switching variables. To efficiently solve this problem, we propose a two-phase optimization framework. In the first phase, the feed-point selection is determined using a greedy switching algorithm. In the second phase, with the feed-point selection fixed, the PA positions and beamforming vectors are alternately optimized using gradient ascent and the WMMSE method, respectively.

\subsubsection{Phase I}
In this phase, we propose a greedy switching algorithm to determine $\boldsymbol{\xi}$. Given fixed beamforming vectors $\mathbf{P}$ and PA positions $\mathbf{x}$, the original problem ${\rm P2}$ reduces to finding the optimal binary feed selection vector $\boldsymbol{\xi}$. This sub-problem is a combinatorial integer programming problem. A brute-force search for the optimal $\boldsymbol{\xi}$ would require evaluating the objective function for all $2^N$ possible combinations, which incurs a prohibitive computational complexity of $\mathcal{O}(2^N)$ and is intractable for large $N$. \par

To address this challenge efficiently, we propose an iterative greedy switching algorithm. Instead of optimizing the entire vector $\boldsymbol{\xi}$ simultaneously, the proposed method sequentially updates the feed point of each waveguide while keeping the others fixed. Specifically, for the $n$-th waveguide, we compare the system sum rate achievable by selecting the left feed ($\xi_n=1$) versus the right feed ($\xi_n=0$). For each case, the beamforming vectors are obtained using the WMMSE method, and the PA position is set as
\begin{equation}
    x_n^{\rm Pin} = \begin{cases}
        0,& \xi_n = 1, \\
        L_x,& \xi_n = 0.
    \end{cases} \label{PA init}
\end{equation}
This process is repeated for all $n=1,\dots,N$ in a cyclic manner until the feed selection vector converges. \par

There are three advantages for adopting this greedy switching strategy over continuous relaxation or exhaustive search. First, regarding computational efficiency, the proposed method significantly reduces the search space from an exponential complexity of $\mathcal{O}(2^N)$ to a linear complexity of $\mathcal{O}(I_{\rm iter} \cdot N)$ per outer iteration, where $I_{\rm iter}$ is the typically small number of greedy passes required. Second, the algorithm guarantees monotonic convergence. Specifically, the feed indicator $\xi_n$ is updated if and only if the change yields a strictly higher sum rate, ensuring the sequence of non-decreasing updates converges to a stationary point. Finally, unlike relaxation methods that approximate $\xi_n$ as a continuous variable in $[0,1]$, this approach maintains physical consistency by strictly respecting the binary nature of the hardware switches, thereby avoiding convergence to non-physical intermediate states (e.g., $\xi_n=0.5$) that would otherwise require heuristic rounding. \par

\subsubsection{Phase II}
In this phase, the PA positions and beamforming vectors are alternately optimized using gradient ascent and the WMMSE method based on the feed-point selection  $\boldsymbol{\xi}^*$ obtained from the greedy switching algorithm. \par 

\textbf{\textit{PA Position Optimization}}: We aim to maximize the sum rate $R_{\rm sum}$ with respect to the PA positions $\mathbf{x}$. Since each element in $\mathbf{x}$ is continuous within the range $[0, L_x]$, we can use gradient ascent method to optimize $\mathbf{x}$. For simplicity, we omit the superscript ${\rm Pin}$ in the following derivation. The update rule for the $n-$th waveguide's PA position at iteration $t$ is given by
\begin{equation}
    x_n^{(t+1)} = \left[x_n^{(t)} + \ell \frac{\partial R_{\rm {sum}}}{\partial x_n}\right]_0^{L_x}, \label{PA positions updating rule}
\end{equation}
where $\ell$ is the step size and $[\cdot]_0^{L_x}$ denotes projection onto the feasible region $[0,L_x]$. \par
To calculate $\frac{\partial R_{\rm sum}}{\partial x_n}$, we need to use the chain rule:
\begin{align}
    \frac{\partial R_{\rm sum}}{\partial x_n} = \sum_{m=1}^{M} \frac{\partial R_m}{\partial \gamma_m} \frac{\partial \gamma_m}{\partial x_n} = \sum_{m=1}^{M} \frac{1}{\ln2 (1+\gamma_m)} \frac{\partial \gamma_m}{\partial x_n}. \label{chain rule}
\end{align}
The SINR for ${\rm U}_m$ can be rewritten as $\gamma_m = \frac{S_m}{I_m+\sigma^2}$, where $S_m = |\tilde{\mathbf{h}}_m^H \mathbf{p}_m|^2$ and $I_m = \sum_{i \neq m} |\tilde{\mathbf{h}}_m^H \mathbf{p}_i|^2$. Since $\tilde{\mathbf{h}}_m$ is complex, we use the property $\frac{\partial |z|^2}{\partial x} = 2\text{Re}\{z^* \frac{\partial z}{\partial x}\}$. Then, we have
\begin{equation}
    \frac{\partial \gamma_m}{\partial x_n} = \frac{1}{I_m + \sigma^2} \frac{\partial S_m}{\partial x_n} - \frac{S_m}{(I_m + \sigma^2)^2} \frac{\partial I_m}{\partial x_n}, \label{SINR gradient}
\end{equation}
where
\begin{equation}
    \frac{\partial S_m}{\partial x_n} = 2\text{Re} \left\{ (\tilde{\mathbf{h}}_m^H \mathbf{p}_m)^* \cdot \left( p^*_{n,m} \frac{\partial \tilde{h}_{n,m}}{\partial x_n} \right) \right\} \label{Sm gradient}
\end{equation}
and
\begin{equation}
    \frac{\partial I_m}{\partial x_n} = \sum_{i \neq m} 2\text{Re} \left\{ (\tilde{\mathbf{h}}_m^H \mathbf{p}_i)^* \cdot \left( p^*_{n,i} \frac{\partial \tilde{h}_{n,m}}{\partial x_n} \right) \right\}. \label{Im gradient}
\end{equation}
The next step is to calculate the gradient $\frac{\partial \tilde{h}_{n,m}}{\partial x_n}$. Note that the effective channel is 
\begin{equation}
    \tilde{h}_{n,m} = \nu_n(x_n) \left( h_{n,m}^{{\rm LoS}}(x_n) + h_{n,m}^{{\rm NLoS}}(x_n) \right).
\end{equation}
Using the product rule, $\frac{\partial \tilde{h}_{n,m}}{\partial x_n}$ can be expressed as
\begin{equation}
    \frac{\partial \tilde{h}_{n,m}}{\partial x_n} = \frac{\partial \nu_n}{\partial x_n} (h^{{\rm LoS}}_{n,m} + h^{{\rm NLoS}}_{n,m}) + \nu_n \left( \frac{\partial h^{{\rm LoS}}_{n,m}}{\partial x_n} + \frac{\partial h^{{\rm NLoS}}_{n,m}}{\partial x_n} \right). \label{channel gradient}
\end{equation}
Recall that $\nu_n = \xi_n e^{-(\frac{\alpha}{2} + j \frac{2\pi}{\lambda_g}) x_n} + (1 - \xi_n) e^{-(\frac{\alpha}{2} + j \frac{2\pi}{\lambda_g})(L_x - x_n)}$, $\frac{\partial \nu_n}{\partial x_n}$ can be expressed as
\begin{equation}
     \frac{\partial \nu_n}{\partial x_n} = \begin{cases}
         -\left(\frac{\alpha}{2} + j \frac{2\pi}{\lambda_g}\right) e^{-(\frac{\alpha}{2} + j \frac{2\pi}{\lambda_g}) x_n}, & \xi_n = 1, \\
          \left(\frac{\alpha}{2} + j \frac{2\pi}{\lambda_g}\right) e^{-(\frac{\alpha}{2} + j \frac{2\pi}{\lambda_g})(L_x - x_n)}, & \xi_n = 0.
     \end{cases}
\end{equation}
To calculate $\frac{\partial h^{{\rm LoS}}_{n,m}}{\partial x_n}$, we use the logarithmic differentiation $f' = f \left(\ln f \right)'$. According to \eqref{LoS channel multi-waveguide}, we have
\begin{equation}
    \frac{\partial \ln h^{\rm LoS}_{n,m}}{\partial x_n}= -j \frac{2 \pi}{\lambda} \frac{\partial r_{n,m}}{\partial x_n} - \frac{1}{r_{n,m}} \frac{\partial r_{n,m}}{\partial x_n}, \label{ln LoS gradient}
\end{equation}
$r_{n,m} = ||\psi^{\mathrm{Pin}}_n - \psi_m|| = \sqrt{(x^{\rm Pin}_n - x_m)^2 + (y_n - y_m)^2 + d^2}$ denotes the Euclidean distance between the PA on the $n-$th waveguide and ${\rm U}_m$. Then, $\frac{\partial h^{{\rm LoS}}_{n,m}}{\partial x_n}$ is given by
\begin{equation}
    \frac{\partial h^{{\rm LoS}}_{n,m}}{\partial x_n} = -h^{\rm LoS}_{n,m} \left(j \frac{2 \pi}{\lambda} + \frac{1}{r_{n,m}} \right) \frac{x_n^{\rm Pin} - x_m}{r_{n,m}}. \label{ LoS gradient}
\end{equation}
We can use the logarithmic differentiation again to calculate $\frac{\partial h^{{\rm NLoS}}_{n,m}}{\partial x_n}$, which is given by
\begin{equation}
    \frac{\partial h_{n,m}^{\rm NLoS}}{\partial x_n} = \sum_{k=1}^{K} \left[ -h_{n,m,k}^{\rm NLoS} \left(j \frac{2 \pi}{\lambda} + \frac{1}{d_{n,k}^{1}} \right) \frac{x_n - x_k^{\rm SC}}{d_{n,k}^{1}} \right], \label{NLoS gradient}
\end{equation}
where $h_{n,m,k}^{\rm NLoS}$ is the $k-$th path component, which is given by
\begin{equation}
    h_{n,m,k}^{\rm NLoS} = \Gamma_k \frac{e^{-j \frac{2\pi}{\lambda} d_{n,k}^{1}(x_n)}}{d_{n,k}^{1}(x_n)} \cdot \frac{e^{-j \frac{2\pi}{\lambda} d_{m,k}^{2}}}{d_{m,k}^{2}}.
\end{equation}
So far, the gradient $ \frac{\partial R_{\rm sum}}{\partial x_n}$ can be calculated by all derived sub-gradients. \par

However, excessively large or small gradients may cause unstable updates. To address this issue, we adopt the normalized gradient to update PA positions. Let $g_n^{(t)} = \frac{\partial R_{\rm sum}}{\partial x_n}$ and $G_{\max}^{(t)} = \max_k \left| \frac{\partial R_{\rm sum}}{\partial x_k} \right|$. The normalized gradient is $\frac{g_n^{(t)}}{G_{\max}^{(t)}}$. Moreover, standard gradient ascent with a fixed step size is prone to oscillation or convergence failure in such non-convex environments. To address this issue, we employ a backtracking line search (BLS) with an adaptive step size strategy. The key idea of BLS is that the step size $\ell$ is not fixed but is dynamically adjusted at each iteration to satisfy the sufficient ascent condition. Let $\ell'^{(t)}$ be the candidate step size at the $t$-th iteration. The actual step size at the $t$-th iteration is given by
\begin{equation}
    \ell^{(t)} = \ell'^{(t)} \cdot \beta^{k^*},
\end{equation}
where $\beta \in (0,1)$ is the contraction factor and $k^*$ is the smallest non-negative integer such that the updated position yields a strictly higher sum rate, satisfying
\begin{equation}
    R_{\rm sum} \left(\left[x_n^{(t)} + \ell'^{(t)}\beta^{k^*} \frac{g_n^{(t)}}{G_{\max}^{(t)}}\right]_0^{L_x}, \forall n\right) > R_{\rm sum} \left(x^{(t)}_n, \forall n\right).
\end{equation}
Once a successful step is found, the candidate step size for the next iteration is expanded to accelerate convergence, which is given by
\begin{equation}
    \ell'^{(t+1)} \leftarrow \zeta \cdot \ell^{(t)},
\end{equation}
where $\zeta > 1$ is the expansion factor. BLS allows the algorithm to aggressively increase the step size when the direction is promising, while quickly backing off when the gradient direction becomes unstable. Finally, the update of the PA position on the $n$-th waveguide at iteration $t$ is then modified as
\begin{equation}
    x_n^{(t+1)} = \left[\, x_n^{(t)} + \ell^{(t)} \frac{g_n^{(t)}}{G_{\max}^{(t)}} \right]_0^{L_x}. \label{modified PA positions updating rule}
\end{equation}
Compared with the original update rule \eqref{PA positions updating rule}, the normalized update in \eqref{modified PA positions updating rule} accelerates the search through adaptive step size, while ensuring robust convergence against the large dynamic range of gradients caused by random user positions and varying SNR conditions.\par

\begin{algorithm}[t]
\caption{Two-phase Optimization Framework}
\begin{algorithmic}[1]
\Require Error tolerance $\epsilon$.
    \State \textbf{Phase I: Greedy Switching for Feed-Point Selection}
    \State \textbf{Initialization:} Initialize $\boldsymbol{\xi}^{(0)}$. Initialize $\mathbf{x}^{(0)}$ by \eqref{PA init}. Initialize $\mathbf{P}^{(0)}$ by \eqref{beamforming update}.  Calculate the sum rate $R_{\rm sum}^{(0)}$.
    \Repeat
        \For{$n = 1$ to $N$}
            \State \textbf{Case 1:} Set $\xi_n^{(t)} = 1$ and temporary $x_n^{{\rm Pin} (t)} = 0$. Solve for $\mathbf{P}^{(t)}$ by \eqref{beamforming update}. Calculate the sum rate $R_{\rm sum}^{l (t)}$.
            \State \textbf{Case 2:} Set $\xi_n^{(t)} = 0$ and temporary $x_n^{{\rm Pin}(t)} = L_x$. Solve for $\mathbf{P}^{(t)}$ by \eqref{beamforming update}. Calculate the sum rate $R_{\rm sum}^{r(t)}$.
            \If{$R_{\rm sum}^{l (t)} \ge R_{\rm sum}^{r (t)}$}
                \State Update $\xi_n^{(t+1)} \leftarrow 1$
            \Else
                \State Update $\xi_n^{(t+1)} \leftarrow 0$
            \EndIf
        \EndFor
        \State Update $\mathbf{x}^{(t+1)}$ by \eqref{PA init} and $\mathbf{P}^{(t+1)}$ by \eqref{beamforming update} according to $\boldsymbol{\xi}^{(t+1)}$. Calculate the sum rate $R_{\rm sum}^{(t+1)}$.
        \State Update $t \leftarrow t + 1$.
    \Until{$R_{\rm sum}^{(t+1)} - R_{\rm sum}^{(t)} < \epsilon$}.
    \State \Return $\boldsymbol{\xi}^* \leftarrow \boldsymbol{\xi}^{(t+1)}$.
        
    \State \textbf{Phase II: Alternating Optimization}
     \State \textbf{Initialization:} Initialize $\mathbf{x}^{(0)}$, $\mathbf{P}^{(0)}$. Calculate the sum rate $R_{\rm sum}^{(0)}$.
    \Repeat

        \State \textbf{Step 1: PA Position Update}
        \State Calculate gradients $\frac{\partial R_{sum}}{\partial x_{n}}$ by \eqref{chain rule}-\eqref{NLoS gradient}.
        \State Update positions $\mathbf{x}^{(t+1)}$ by \eqref{modified PA positions updating rule}.
        
        \State \textbf{Step 2: WMMSE Beamforming}
        \Repeat
            \State Update receiver coefficients $u_m^*$ by \eqref{wmmse receiver co}.
            \State Update MSE weights $w_m^*$ by \eqref{wmmse weight}.
            \State Update beamforming vectors $\mathbf{p}_m^*$ by \eqref{beamforming update}.
        \Until{WMMSE converges}
    \State Update $\mathbf{P}^{(t+1)}$ by $\mathbf{p}_m^*, \forall m$.
    \State Calculate the sum rate $R_{\rm sum}^{(t+1)}$.
    \State Update $t \leftarrow t + 1$.
    \Until{$R_{\rm sum}^{(t+1)} - R_{\rm sum}^{(t)} < \epsilon$}.
\State $\mathbf{x}^* \leftarrow \mathbf{x}^{(t+1)}$, $\mathbf{P}^* \leftarrow \mathbf{P}^{(t+1)}$.
\end{algorithmic} \label{alg two-phase}
\end{algorithm}

\textbf{\textit{Beamforming Design}}: With fixed PA positions $\mathbf{x}$ and feed-point selection $\boldsymbol{\xi}^*$, the channel vectors $\tilde{\mathbf{h}}_m(\boldsymbol{\xi}^*, \mathbf{x}), \forall m$ become constant. Let $\mathbf{h}_m$ denote this fixed effective channel for ${\rm U}_m$. The original problem ${\rm P2}$ reduces to maximizing the sum rate with respect to the beamforming vectors $\mathbf{P}$ under the total power constraint \eqref{prob_power}. To address this, we employ the WMMSE algorithm, which establishes an equivalence between sum rate maximization and the minimization of the WMSE. We introduce two sets of auxiliary variables: the receiver decoding coefficient $u_m \in \mathbb{C}$ and the MSE weight $w_m > 0$ for each ${\rm U}_m$. According to WMMSE algorithm, $e_m$ is expressed as
\begin{equation}
    e_m = |u_m|^2 \left( \sum_{i=1}^M |\mathbf{h}_m^H \mathbf{p}_i|^2 + \sigma^2 \right) - 2 \text{Re}\{u_m^H \mathbf{h}_m^H \mathbf{p}_m\} + 1. \label{wmmse em}
\end{equation}
The problem is then transformed into minimizing the weighted sum-MSE function $\sum_{m=1}^M ( w_m e_m - \ln(w_m) )$ subject to the power constraint.

This equivalent WMMSE problem is convex with respect to each variable block when the others are fixed. Then, the optimal receiver $u_m$ is given by
\begin{equation}
    u_m^* = \left( \sum_{i=1}^M |\mathbf{h}_m^H \mathbf{p}_i|^2 + \sigma^2 \right)^{-1} \mathbf{h}_m^H \mathbf{p}_m. \label{wmmse receiver co}
\end{equation}
Subsequently, the optimal weight $w_m$ is given by
\begin{equation}
    w_m^* =  \frac{1}{e_m(u_m^*)}. \label{wmmse weight}
\end{equation}
Finally, the objective function becomes quadratic in $\mathbf{p}_m$. The optimal beamforming vector is obtained by solving the Karush-Kuhn-Tucker (KKT) conditions, yielding the solution
\begin{equation}
    \mathbf{p}_m^* = \left( \sum_{k=1}^M w_k |u_k|^2 \mathbf{h}_k \mathbf{h}_k^H + \mu \mathbf{I} \right)^{-1} w_m u_m \mathbf{h}_m, \label{beamforming update}
\end{equation}
where $\mu \ge 0$ is the Lagrange multiplier associated with the power constraint. The optimal $\mu$ is found via a one-dimensional bisection search such that $\sum_{m=1}^M ||\mathbf{p}_m(\mu)||^2 = P_0$.

\subsubsection{Overall Algorithm}
To address ${\rm P2}$, we develop a two-phase optimization framework that decomposes the coupled design of feed-point selection, PA placement, and beamforming. In Phase I, a greedy switching strategy is employed to determine the feed-point configuration. For efficiency, PA positions are not continuously optimized in this phase; instead, each PA is temporarily placed at its selected feed point and the beamforming vectors are updated via the WMMSE algorithm. After the feed-point selection converges, Phase II alternates between PA placement and beamforming optimization. Specifically, PA positions are updated by gradient ascent based on the derived rate gradients, while the beamforming vectors are optimized using the WMMSE algorithm under the updated PA positions. This decomposition significantly reduces the computational burden while providing an effective solution to ${\rm P2}$. The details are summarized in Algorithm \ref{alg two-phase}.

\section{Simulation Results}
In this section, we present simulation results for both single-waveguide and multi-waveguide scenarios. For each scenario, the performance of the DF-PAS is evaluated and compared with several benchmark schemes from both analytical and optimization perspectives. The simulation parameters are set as follows: the waveguide height is $d = 1.5$ m, the carrier frequency is $f_c = 28$ GHz, the in-waveguide attenuation coefficient is $\alpha = 1.48$ dB/m, and the noise power spectral density is $\sigma^2 = -90$ dBm/Hz. The scattering environment is modeled with $N_{n,m} = 10$ scattering paths between each PA and each user. All scatterers are assumed to have identical reflection coefficients $\Gamma_k = 0.5, \forall k$. The contraction and expansion factors are set to $\beta = 0.5$ and $\zeta = 1.2$, respectively. \par
\begin{figure*}[t!]
    \centering
    % --- First Subfigure ---
    \begin{subfigure}[b]{0.32\textwidth}
        \centering
        \includegraphics[width=\textwidth]{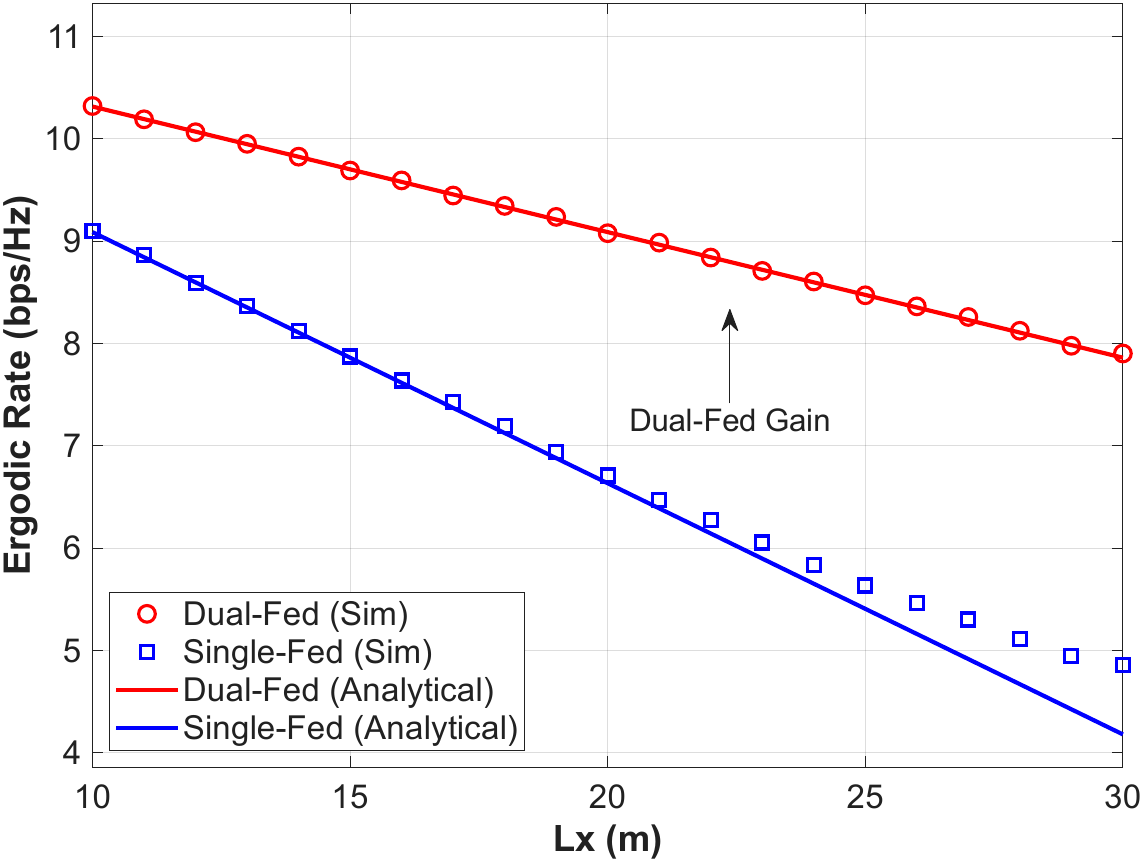}
        \caption{$P_0 = 30$ dBm, $L_y=10$ m}
        \label{single_erate_verify}
    \end{subfigure}
    \hfill
    % --- Second Subfigure ---
    \begin{subfigure}[b]{0.32\textwidth}
        \centering
        \includegraphics[width=\textwidth]{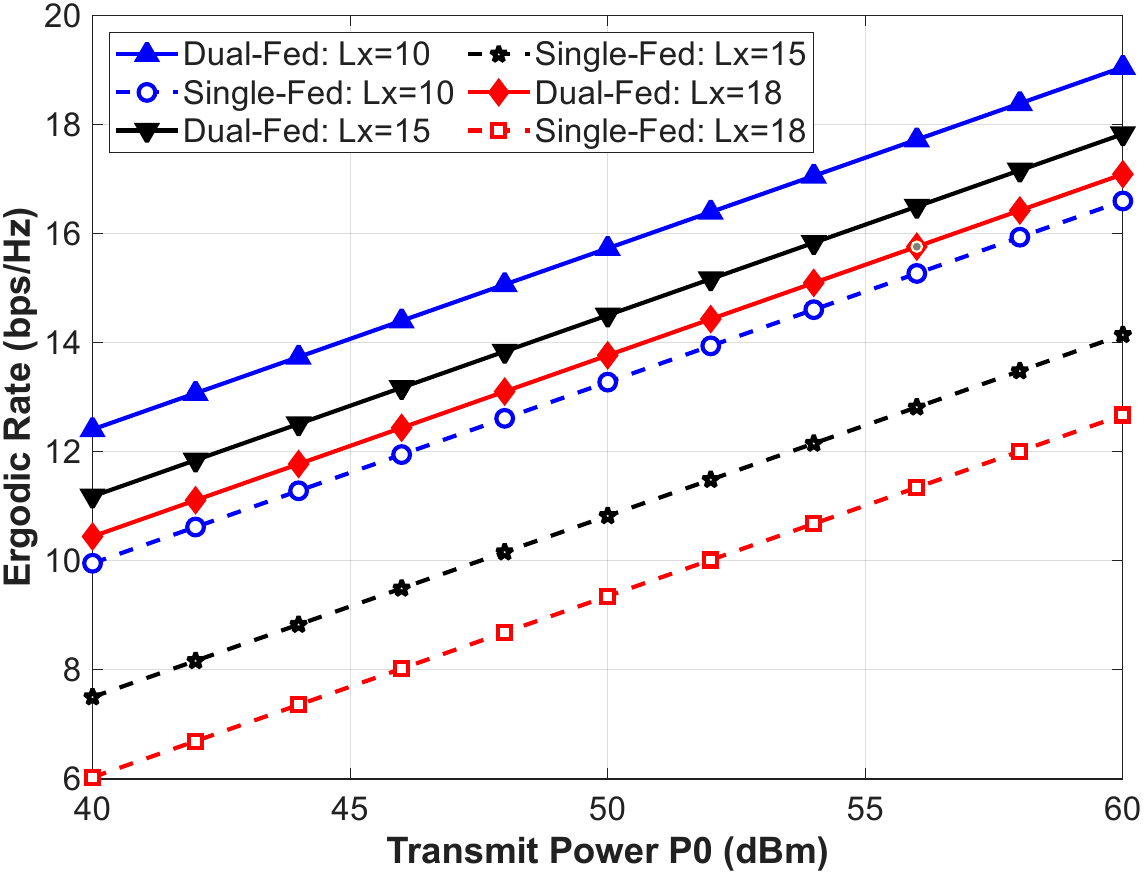}
        \caption{$L_y= 10$ m}
        \label{single_erate_power_Ly}
    \end{subfigure}
    \hfill 
    % --- Third Subfigure ---
    \begin{subfigure}[b]{0.32\textwidth}
        \centering
        \includegraphics[width=\textwidth]{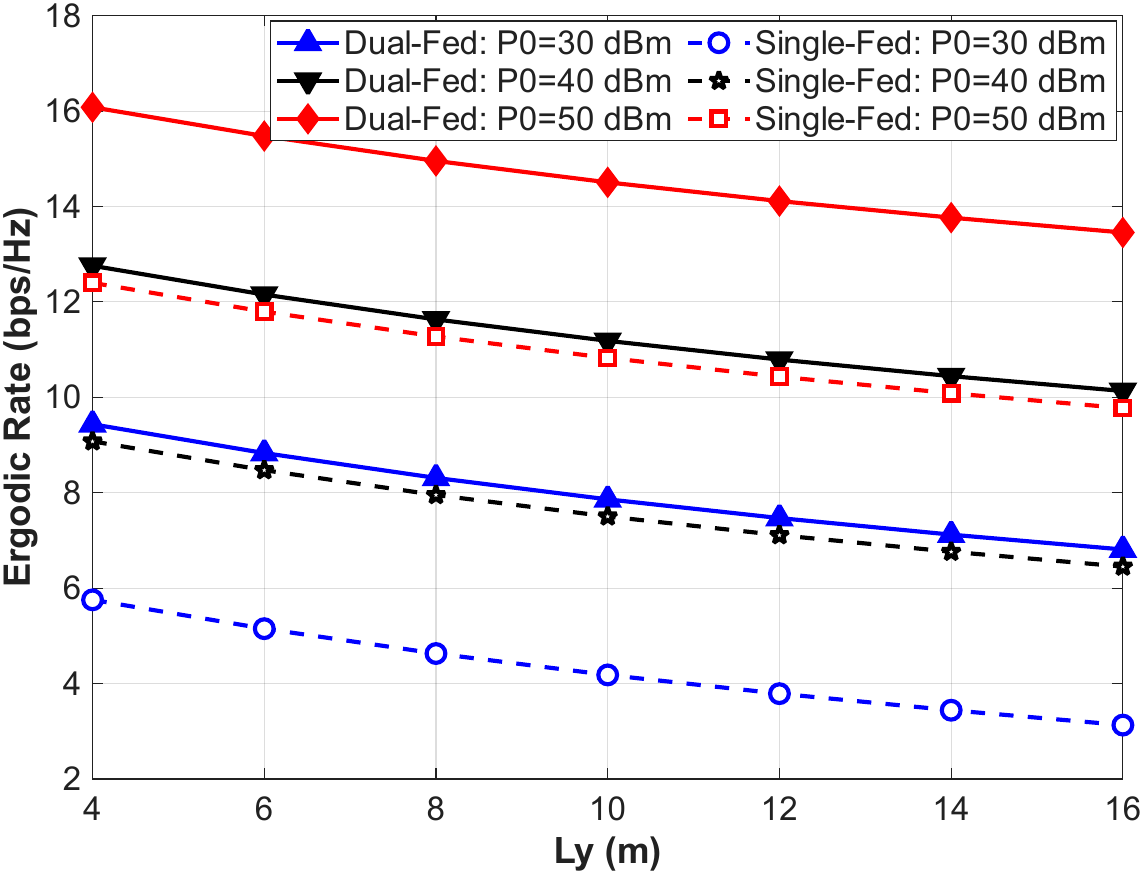}
        \caption{$L_x=15$ m}
        \label{single_erate_Ly_power}
    \end{subfigure}
    \caption{Ergodic rate performance analysis of the single-waveguide scenario: (a) Ergodic data rate versus waveguide length $L_x$, which verifies the accuracy of the derived analytical result \eqref{closed form e rate} against Monte Carlo simulations; (b) Ergodic data rate versus transmit power $P_0$ under different values of $L_x$; (c) Ergodic data rate versus service area width $L_y$ under different transmit power levels $P_0$.}
    \label{single_erate}
\end{figure*}

\begin{figure*}[t!]
    \centering
    % --- First Subfigure ---
    \begin{subfigure}[b]{0.32\textwidth}
        \centering
        \includegraphics[width=\textwidth]{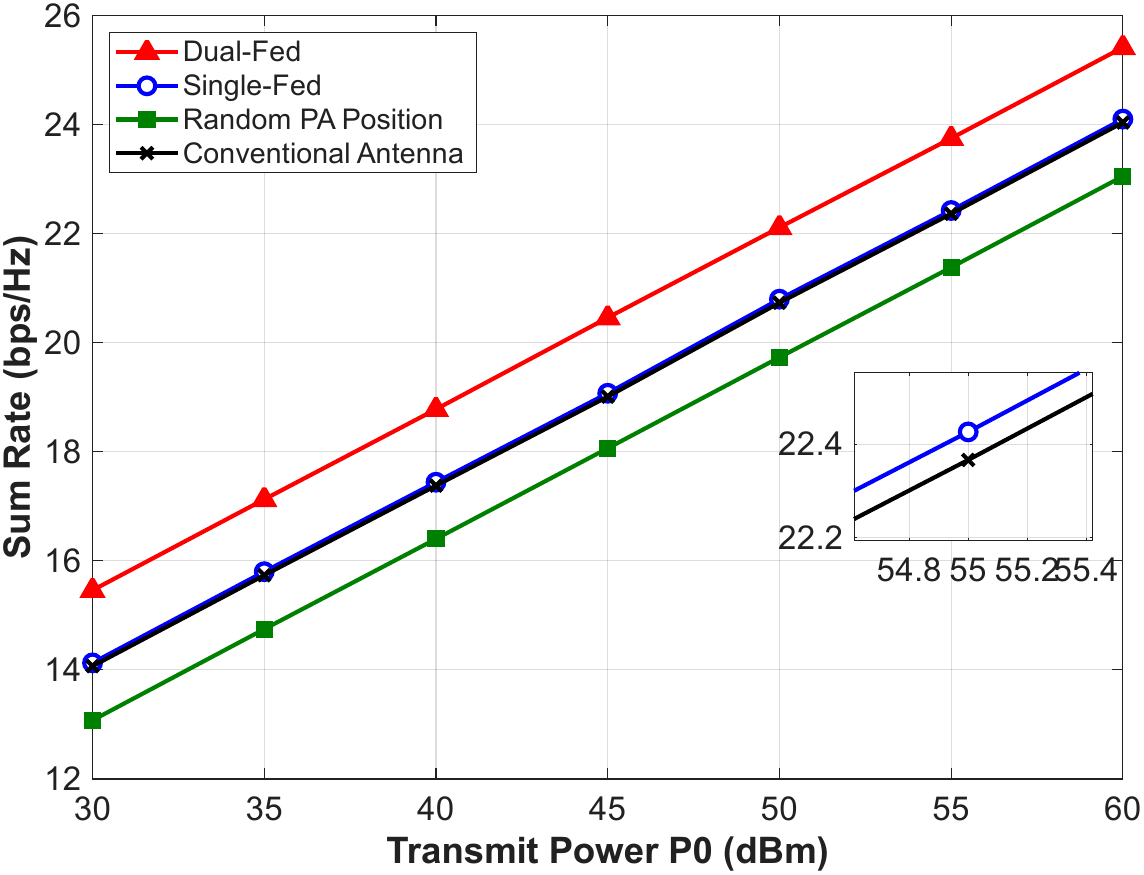}
        \caption{$L_x = 15$ m, $L_y=5$ m}
        \label{op_single_power}
    \end{subfigure}
    \hfill
    % --- Second Subfigure ---
    \begin{subfigure}[b]{0.32\textwidth}
        \centering
        \includegraphics[width=\textwidth]{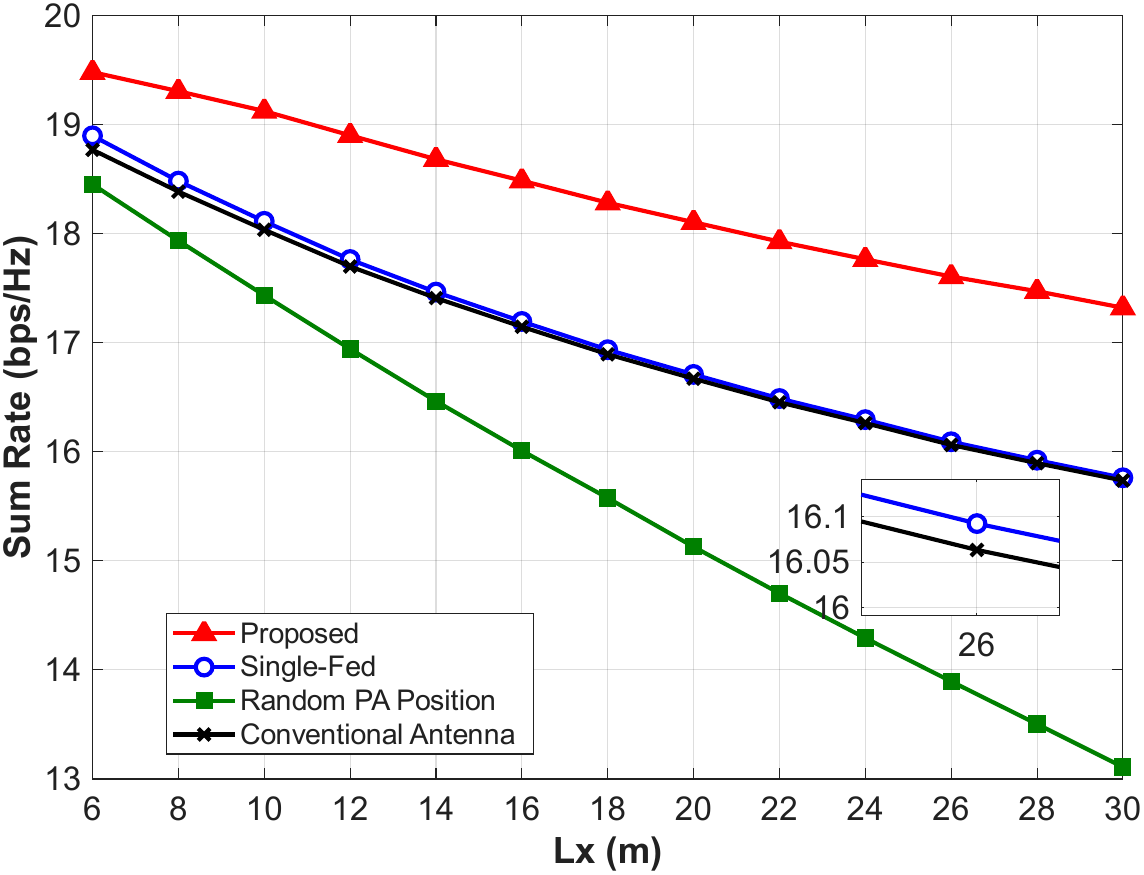}
        \caption{$P_0 = 40$ dBm, $L_y=6$ m}
        \label{op_single_Lx}
    \end{subfigure}
    \hfill 
    % --- Third Subfigure ---
    \begin{subfigure}[b]{0.32\textwidth}
        \centering
        \includegraphics[width=\textwidth]{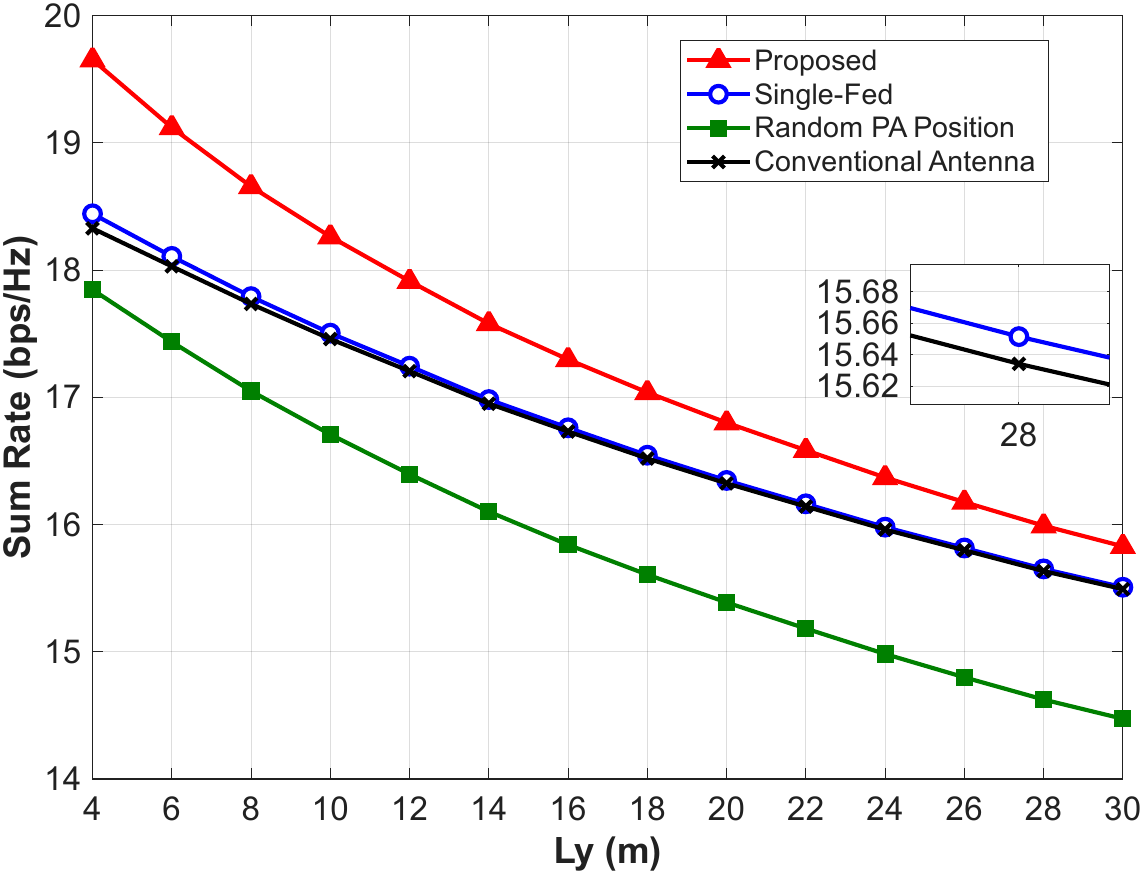}
        \caption{$P_0 = 40$ dBm, $L_x=10$ m}
        \label{op_single_Ly}
    \end{subfigure}
    \caption{Sum rate optimization of the single-waveguide scenario under TDMA: (a) Sum rate versus transmit power $P_0$; (b) Sum rate versus waveguide length $L_x$; (c) Sum rate versus service area width $L_y$.}
    \label{single_op}
\end{figure*}
Fig. \ref{single_erate} illustrates the ergodic rate performance of the single-waveguide DF-PAS and compares it with the corresponding SF-PAS under different system parameters.
\begin{itemize}
    \item In Fig.~\ref{single_erate_verify}, the ergodic rate is plotted versus the waveguide length $L_x$ with $P_0 = 30$~dBm and $L_y = 10$~m. Both Monte Carlo simulation results and the analytical expression in \eqref{closed form e rate} are presented. As $L_x$ increases, the ergodic rate decreases due to the accumulated in-waveguide attenuation. However, the proposed DF-PAS consistently outperforms the SF-PAS and the performance gap between the two schemes widens as $L_x$ increases. Moreover, the close agreement between analytical and simulation curves validates the accuracy of the derived high-SNR approximation.
    \item Fig.~\ref{single_erate_power_Ly} shows the ergodic rate versus the transmit power $P_0$ for different waveguide lengths with $L_y = 10$~m. As expected, increasing $P_0$ leads to a monotonic improvement in the ergodic rate for all schemes. It is also observed that the performance gap between the DF-PAS and SF-PAS becomes larger for larger $L_x$, indicating that the proposed DF-PAS is particularly effective in mitigating severe in-waveguide attenuation in long waveguides.
    \item Fig.~\ref{single_erate_Ly_power} shows the ergodic rate versus the service area width $L_y$ for different transmit power levels with $L_x = 15$~m. As $L_y$ increases, the ergodic rate decreases for both schemes because the average PA-user distance grows. However, the DF-PAS maintains a clear performance advantage across all transmit power levels.
\end{itemize} \par

Fig.~\ref{single_op} illustrates the optimized sum rate performance of the single-waveguide DF-PAS under TDMA and compares it with the SF-PAS, random PA placement, and the conventional antenna benchmark.

\begin{itemize}
    \item Fig.~\ref{op_single_power} shows the sum rate versus the transmit power $P_0$ with $L_x = 15$~m and $L_y = 5$~m. As $P_0$ increases, the sum rate of all schemes improves due to enhanced received signal power. The proposed DF-PAS consistently achieves the highest sum rate over the entire transmit power range. 

    \item Fig.~\ref{op_single_Lx} shows the sum rate versus the waveguide length $L_x$ with $P_0 = 40$~dBm and $L_y = 6$~m. As $L_x$ increases, the sum rate decreases for all schemes due to the accumulation of in-waveguide attenuation. However, the proposed DF-PAS exhibits a significantly slower performance degradation and consistently outperforms the benchmark schemes, demonstrating its robustness against waveguide-length-induced attenuation.

    \item Fig.~\ref{op_single_Ly} shows the sum rate versus the service area width $L_y$ with $P_0 = 40$~dBm and $L_x = 10$~m. As $L_y$ increases, the average PA--user distance grows, which leads to a gradual reduction in the sum rate for all schemes. However, the DF-PAS maintains a clear performance advantage across the entire range of $L_y$, highlighting its robustness against in-waveguide attenuation and free-space path loss.
\end{itemize}

\begin{figure*}[t!]
    \centering
    % --- First Subfigure ---
    \begin{subfigure}[b]{0.32\textwidth}
        \centering
        \includegraphics[width=\textwidth]{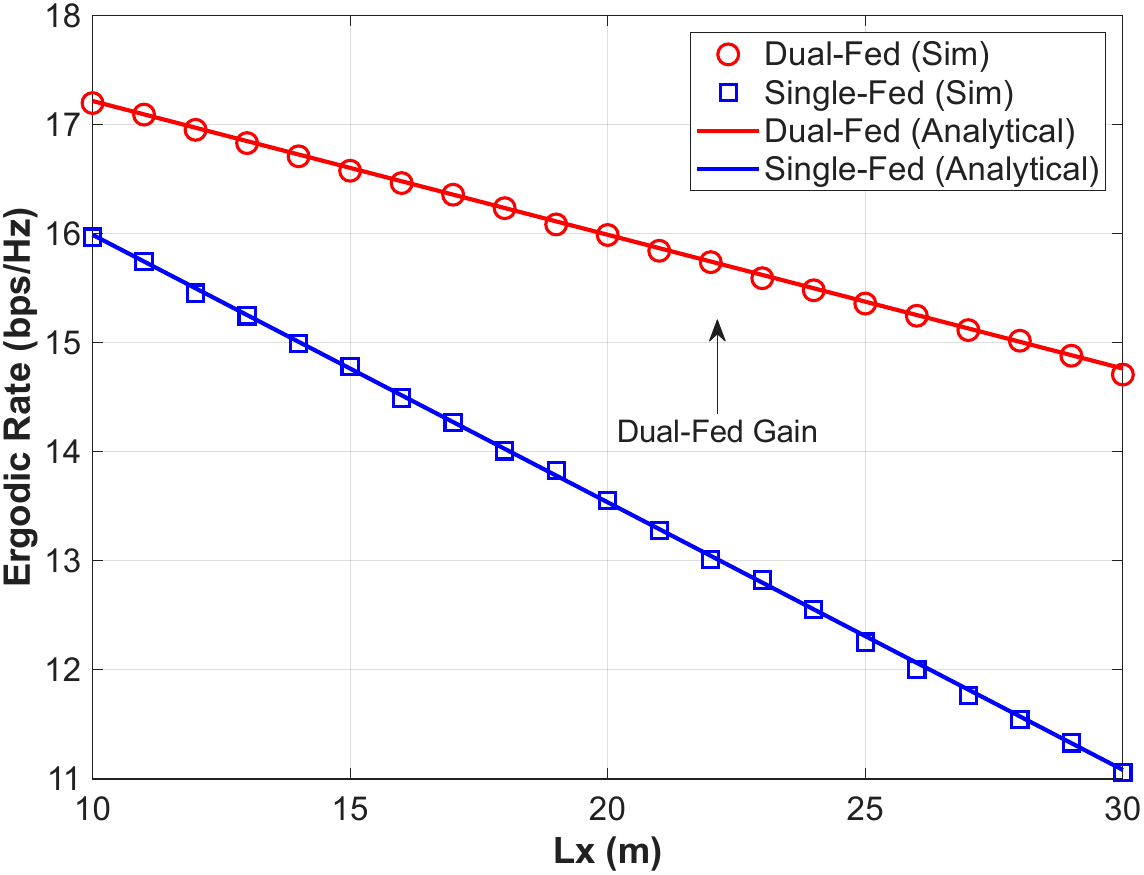}
        \caption{$P_0 = 30$ dBm, $L_y=10$ m}
        \label{multi_erate_verify}
    \end{subfigure}
    \hfill
    % --- Second Subfigure ---
    \begin{subfigure}[b]{0.32\textwidth}
        \centering
        \includegraphics[width=\textwidth]{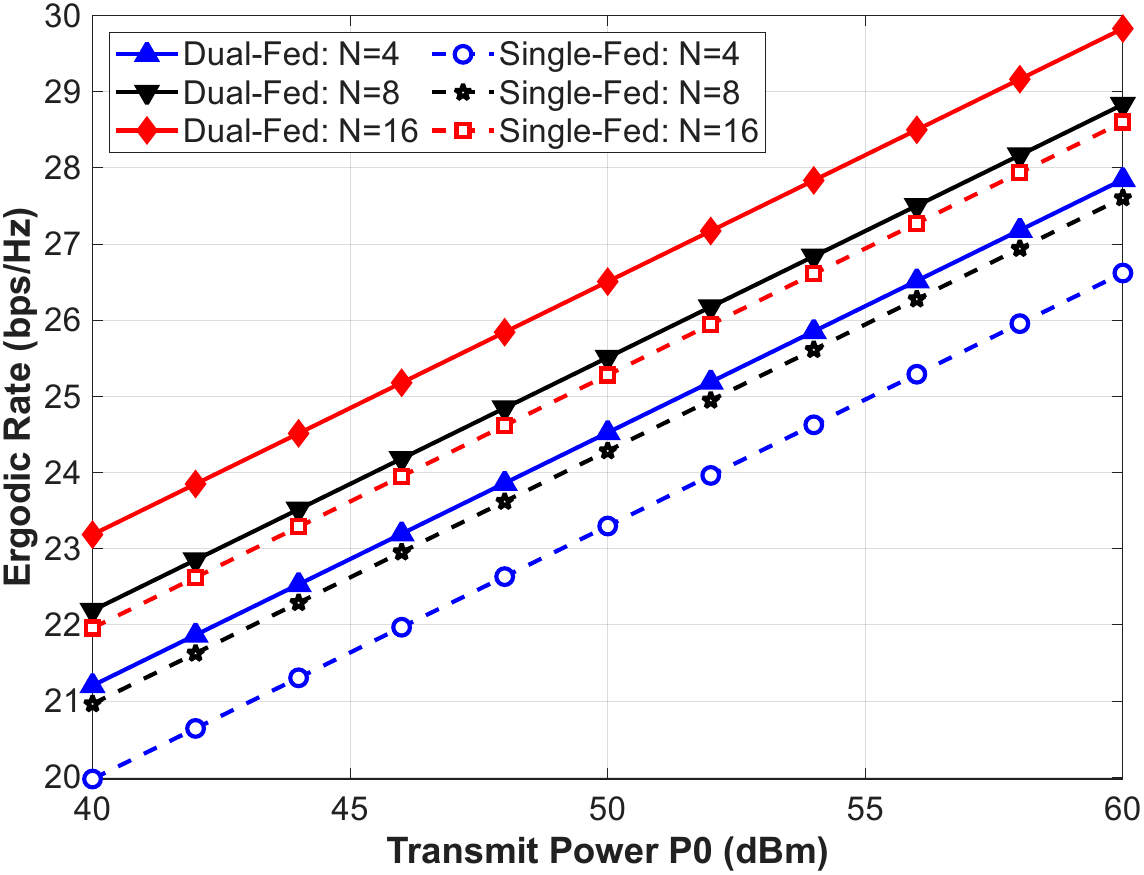}
        \caption{$L_x = 10$ m, $L_y=6$ m}
        \label{multi_erate_power_N}
    \end{subfigure}
    \hfill 
    % --- Third Subfigure ---
    \begin{subfigure}[b]{0.32\textwidth}
        \centering
        \includegraphics[width=\textwidth]{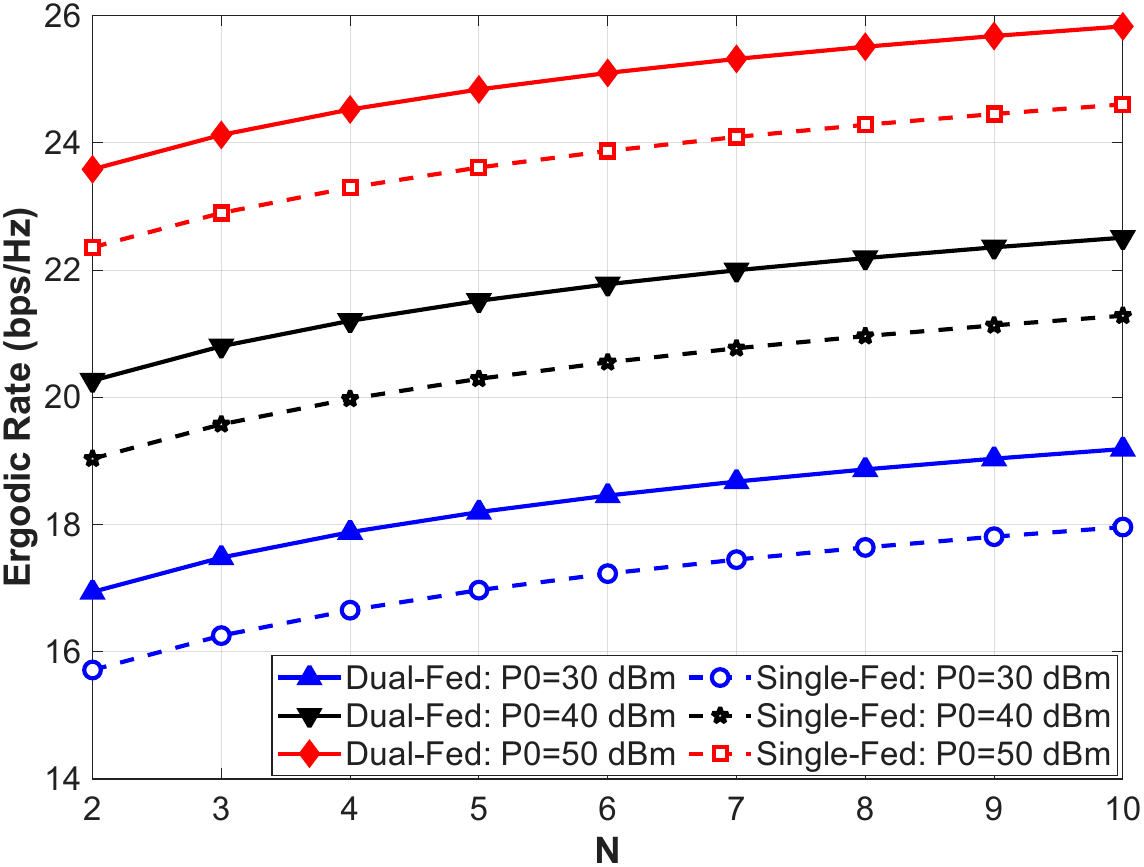}
        \caption{$L_x = 10$ m, $L_y=6$ m}
        \label{multi_erate_N_power}
    \end{subfigure}
    \caption{Ergodic rate performance analysis of the multi-waveguide scenario: (a) Ergodic data rate versus waveguide length $L_x$, which verifies the accuracy of the derived analytical result \eqref{closed form e rate multi waveguide} against Monte Carlo simulations; (b) Ergodic data rate versus transmit power $P_0$ under different numbers of waveguides $N$; (c) Ergodic data rate versus number of waveguides $N$ under different transmit power levels $P_0$.}
    \label{multi_erate}
\end{figure*}

\begin{figure*}[t!]
    \centering
    % --- First Subfigure ---
    \begin{subfigure}[b]{0.32\textwidth}
        \centering
        \includegraphics[width=\textwidth]{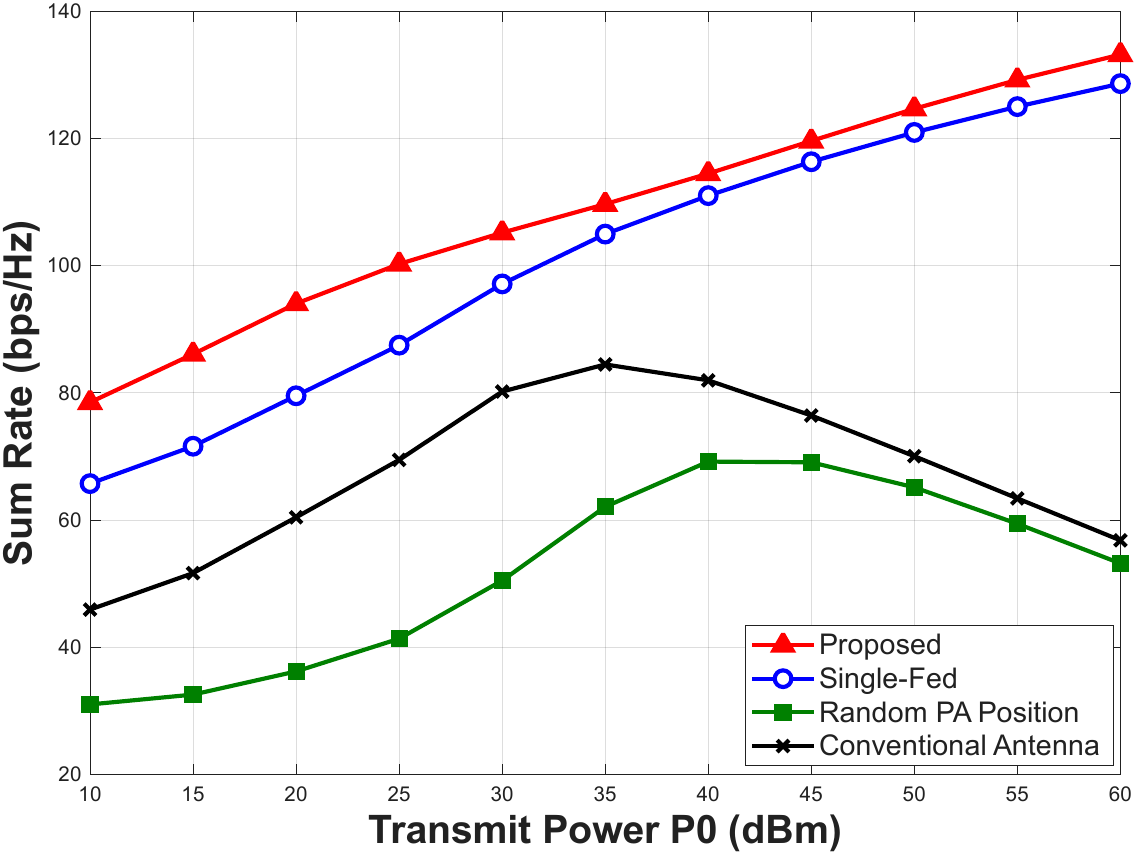}
        \caption{$L_x = 30$ m, $L_y=6$ m}
        \label{op_multi_power}
    \end{subfigure}
    \hfill
    % --- Second Subfigure ---
    \begin{subfigure}[b]{0.32\textwidth}
        \centering
        \includegraphics[width=\textwidth]{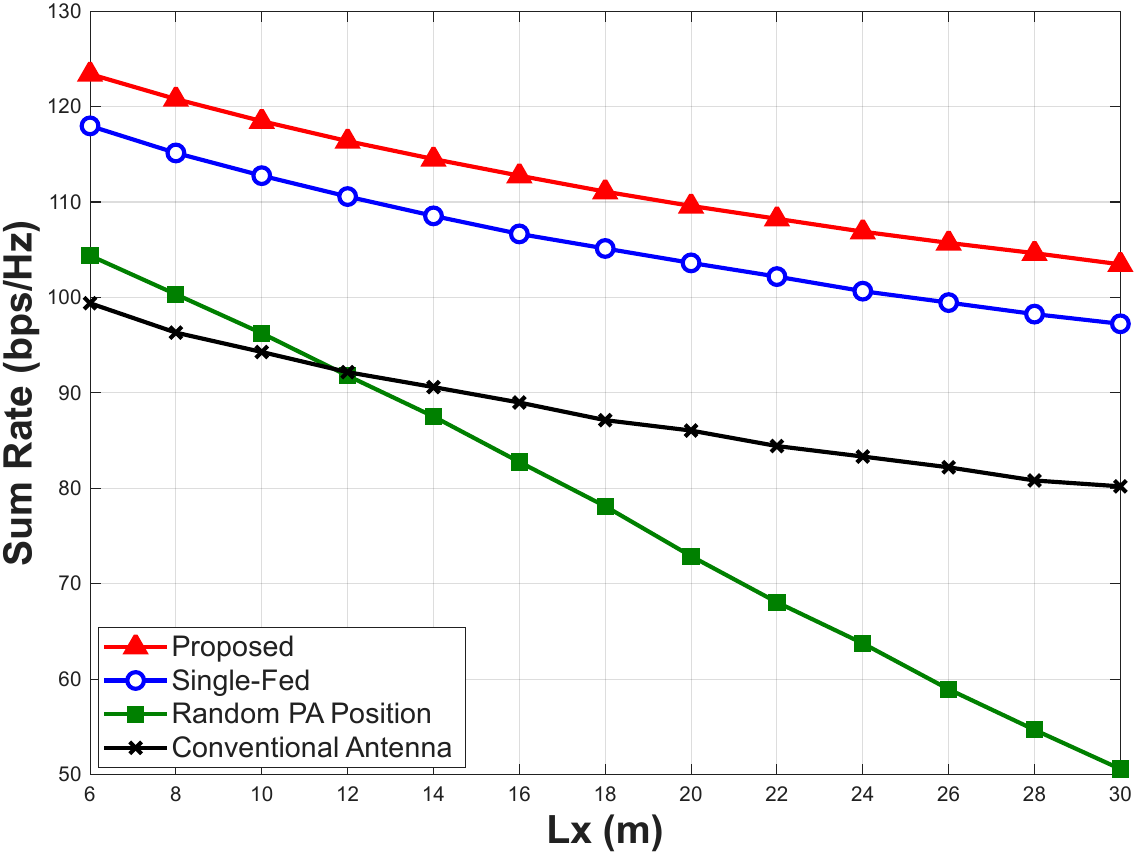}
        \caption{$P_0 = 30$ dBm, $L_y=6$ m}
        \label{op_multi_Lx}
    \end{subfigure}
    \hfill 
    % --- Third Subfigure ---
    \begin{subfigure}[b]{0.32\textwidth}
        \centering
        \includegraphics[width=\textwidth]{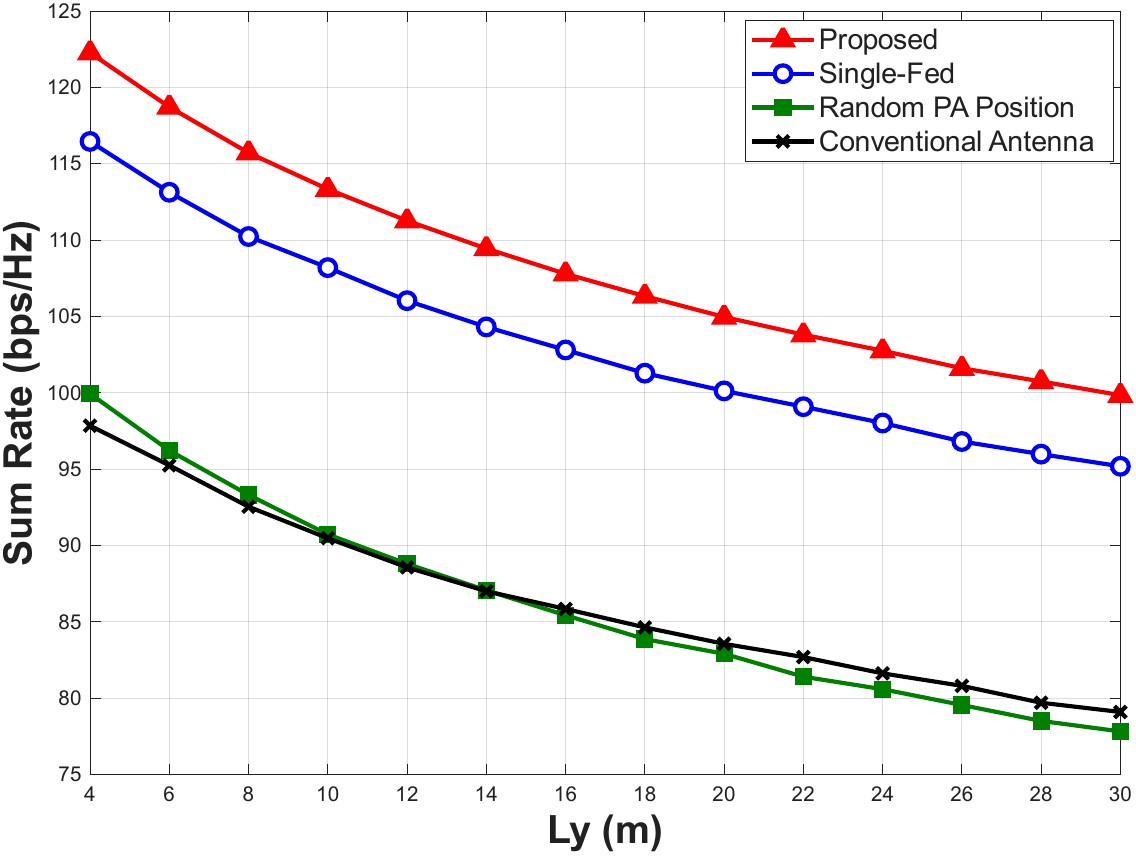}
        \caption{$P_0 = 30$ dBm, $L_x=10$ m}
        \label{op_multi_Ly}
    \end{subfigure}
    \caption{Sum rate optimization of the multi-waveguide scenario under general OMA. (a) Sum rate versus transmit power $P_0$; (b) Sum rate versus waveguide length $L_x$; (c) Sum rate versus service area width $L_y$.}
    \label{multi_op}
\end{figure*}

Fig.~\ref{multi_erate} illustrates the ergodic rate performance of the multi-waveguide DF-PAS and compares it with the corresponding SF-PAS under different system parameters.

\begin{itemize}
    \item Fig.~\ref{multi_erate_verify} shows the ergodic rate versus the waveguide length $L_x$ with $P_0 = 30$~dBm and $L_y = 10$~m. Both Monte Carlo simulation results and the analytical expression in \eqref{closed form e rate multi waveguide} are presented. Similar to the observations in Fig.~\ref{single_erate_verify}, the multi-waveguide DF-PAS consistently outperforms the multi-waveguide SF-PAS across the considered range of $L_x$. Moreover, the close agreement between the analytical and simulation results validates the accuracy of the derived high-SNR approximation.
    
    \item Fig.~\ref{multi_erate_power_N} shows the ergodic rate versus the transmit power $P_0$ for different numbers of waveguides $N$ with $L_x = 10$~m and $L_y = 6$~m. As expected, increasing $P_0$ leads to a monotonic improvement in the ergodic rate for all schemes. In addition, deploying more waveguides significantly enhances the ergodic rate due to increased spatial diversity and array gain. For all values of $N$, the DF-PAS consistently outperforms the SF-PAS, demonstrating the benefit of dual-feed operation in multi-waveguide deployments.

    \item Fig.~\ref{multi_erate_N_power} shows the ergodic rate versus the number of waveguides $N$ under different transmit power levels with $L_x = 10$~m and $L_y = 6$~m. The ergodic rate increases monotonically with $N$ for both schemes, and the performance gain of the DF-PAS remains evident across all transmit power levels. This result confirms that the proposed dual-fed architecture scales well with the number of waveguides while effectively mitigating in-waveguide attenuation.
\end{itemize}

Fig.~\ref{multi_op} illustrates that the optimized sum rate performance of the multi-waveguide DF-PAS under the general OMA transmission scheme and compares it with the multi-waveguide SF-PAS, random PA placement, and the conventional antenna benchmark.

\begin{itemize}
    \item Fig.~\ref{op_multi_power} shows the sum rate versus the transmit power $P_0$ with fixed waveguide length $L_x = 30$~m and service-area width $L_y = 6$~m. As $P_0$ increases, the sum rate of all schemes improves due to enhanced signal power. The proposed DF-PAS consistently achieves the highest sum rate across the entire transmit power range. In contrast, the performance degradation of the random PA position and conventional antenna benchmarks at high transmit power is due to severe multi-user interference. As $P_0$ increases, these schemes become interference-limited because of their lack of spatial adaptability, causing the sum rate to eventually decrease.

    \item Fig.~\ref{op_multi_Lx} shows the sum rate versus the waveguide length $L_x$ with $P_0 = 30$~dBm and $L_y = 6$~m. As $L_x$ increases, the sum rate decreases for all schemes due to the accumulation of in-waveguide attenuation. However, the proposed DF-PAS consistently outperforms all benchmark schemes in the multi-waveguide scenario, demonstrating its robustness against severe waveguide attenuation in large-scale deployments.

    \item Fig.~\ref{op_multi_Ly} shows the sum rate versus the service area width $L_y$ with $P_0 = 30$~dBm and $L_x = 10$~m. Increasing $L_y$ enlarges the average distance between the PAs and users, causing sum rate decrease for all schemes. However, the proposed DF-PAS maintains a clear and stable performance advantage across the entire range of $L_y$ in the multi-waveguide scenario, highlighting its robustness against in-waveguide attenuation and free-space path loss.
\end{itemize}

\section{Conclusion}
This paper proposed a DF-PAS as a novel hardware-efficient architecture to mitigate high in-waveguide attenuation in practical dielectric waveguide based PASs. By structurally enabling feed-point selection from both ends of the waveguide, the DF-PAS reduces the effective in-waveguide propagation distance without modifying the waveguide structure or the PA actuation mechanism. The single-waveguide scenario was investigated under TDMA, while the multi-waveguide scenario was studied under general OMA. For both scenarios, analytical expressions and optimization-based solutions were developed to characterize and improve system performance. Simulation results demonstrate that the DF-PAS consistently outperforms the SF-PAS across a wide range of system parameters. More importantly, the DF-PAS achieves these performance gains through feed-point selection rather than structural redesign or advanced materials. As a result, the DF-PAS provides a hardware-efficient and scalable solution for mitigating in-waveguide attenuation while maintaining low implementation complexity. This makes DF-PASs particularly attractive for practical wireless networks, where deployment cost, coverage robustness, and system flexibility are key considerations.

\begin{appendices}
    \section{Proof for Lemma \ref{lemma 1}} \label{appendox A}
        \eqref{int e rate} can be split into three separate parts by using the logarithm property, which are given by
        \begin{equation}
            I_1 = \frac{1}{L_x L_y} \int_{0}^{L_x} \int_{0}^{L_y} \log_2\left(\frac{P_0 \eta}{\sigma^2}\right)\, dy\, dx, \label{I1}
        \end{equation}
        \begin{equation}
            I_2 = \frac{1}{L_x L_y} \int_{0}^{L_x} \int_{0}^{L_y} \log_2 \left( e^{-\alpha \min(x,\, L_x - x)} \right)\, dy\, dx, \label{I2}
        \end{equation}
        and
        \begin{equation}
            I_3 = \frac{1}{L_x L_y} \int_{0}^{L_x} \int_{0}^{L_y} \log_2\!\left( y^2 + d^2 \right)\, dy\, dx. \label{I3}
        \end{equation}
        The term $I_1$ contains the constant SNR factors, which are not related to $x$ and $y$. The integral of $dx\;dy$ over the area is just the area itself. Therefore, the result of $I_1$ is given by
        \begin{equation}
            I_1 = \log_2 \left(\frac{P_0 \eta}{\sigma^2} \right). \label{result of I1}
        \end{equation}
        The second term $I_2$ accounts for the attenuation inside the waveguide. We note that the integrand depends only on $x$, not $y$. Therefore, $y$ can be integrated out separately. After some algebraic transformations, $I_2$ is recast into
        \begin{equation}
            I_2
            = - \frac{\alpha \log_2(e)}{L_x}\int_{0}^{L_x} \min\!\left(x,\, L_x - x\right)\, dx. \label{I2 step1}
        \end{equation}
        We note that the integral $ \int_{0}^{L_x} \min\!\left(x,\, L_x - x\right)\, dx$ can be split at the midpoint $\frac{L_x}{2}$. Then, the result of $I_2$ can be calculated as follows: 
        \begin{align}
            I_2 &= - \frac{\alpha \log_2(e)}{L_x} \left(\int_{0}^{L_x/2} x\, dx+ \int_{L_x/2}^{L_x} \left( L_x - x \right)\, dx \right) \notag \\
            & = - \frac{\alpha \log_2(e)}{L_x} \cdot \left( \frac{L_x^2}{4} \right) = - \frac{\alpha L_x}{4}\,\log_2(e). \label{result of I2}
        \end{align}
        The third term $I_3$ accounts for the geometric loss from the PA to the user. We note that the integrand depends only on $y$, not $x$. Therefore, $x$ can be integrated out separately. After some algebraic transformations, $I_3$ is recast into
        \begin{equation}
            I_3 = \frac{1}{L_y \ln 2} \int_{0}^{L_y} \ln\!\left( y^2 + d^2 \right)\, dy. \label{I3 step3}
        \end{equation}
        We use the standard integral form $\int \ln\!\left(y^2 + d^2\right)\, dy = y \ln\!\left(y^2 + d^2\right) - 2y + 2d \arctan\!\left(\frac{y}{d}\right)$, the result of $I_3$ can be calculated as follows:
        \begin{align}
            I_3 &= \frac{1}{L_y \ln 2}\left[y \ln\!\left(y^2 + d^2\right)- 2y+ 2d \arctan\!\left(\frac{y}{d}\right)\right]_{0}^{L_y} \notag \\
            &=\frac{\ln\!\left(L_y^2 + d^2\right)}{\ln 2} - \frac{2}{\ln 2} + \frac{2d}{L_y \ln 2} \arctan\!\left(\frac{L_y}{d}\right) \notag \\
            &=\log_2\!\left(L_y^2 + d^2\right) - \frac{2}{\ln 2}\left(1 - \frac{d}{L_y}\arctan\!\left(\frac{L_y}{d}\right)\right). \label{result of I3}
        \end{align}
        \eqref{closed form e rate} can be obtained by combining $I_1 + I_2 - I_3$, which completes the proof of the lemma.

    \section{Proof for Lemma \ref{lemma 2}} \label{appendox B}
    For the SF-PAS, the signal is injected from a fixed end (e.g., $x=0$). The only difference between DF-PASs and SF-PASs is the effective in-waveguide distance. Hence, the SF-PAS has the same $I_1$ and $I_3$ with the DF-PAS. Since $x \sim \mathcal{U}[0, L_x]$, the expected distance is $\mathbb{E}[x] = L_x/2$. Consequently, the ergodic rate of the SF-PAS is given by
    \begin{align}
        \bar{R}^{\rm SF}_m &\approx \log_2 \left(\frac{P_0 \eta}{\sigma^2} \right) - \frac{\alpha L_x}{2} \log_2(e) \notag \\
         &-\left[\log_2 \left(L_y^2 + d^2 \right) - \frac{2}{\ln2} \left(1-\frac{d}{L_y} \arctan \left(\frac{L_y}{d}\right) \right) \right]. \label{closed form e rate of SF}
    \end{align}
    Subtracting $\bar{R}^{\rm SF}_m$ from $\bar{R}^{\rm DF}_m$ yields \eqref{e rate gain}. Since $\alpha, L_x > 0$, the gain $\Delta \bar{R}$ is strictly positive. This completes the proof. 

    \section{Proof for Lemma \ref{lemma 3}} \label{appendox C}
   The second integral in \eqref{ergiduc rate multi waveguide} is $-I_2$ in Appendix \ref{appendox A}. Therefore, the result is given by
   \begin{equation}
       \frac{\alpha}{\ln 2} \frac{1}{L_x}\int_0^{L_x} \min \left(x, L_x - x \right)\,dx = \frac{\alpha L_x}{4}\,\log_2(e). \label{append c1}
   \end{equation}
   We note that the $\log$ function is concave. By applying Jensen's inequality $\mathbb{E}[\log_2 x] \leq \log_2 \mathbb{E}[x]$, we have 
   \begin{align}
       \int_0^{L_y} \log_2 \left(\sum_{n=1}^{N}\frac{1}{\sqrt{(y-y_n)^2+d^2}} \right) \frac{1}{L_y}\, dy \notag \\
       \leq \log_2 \left(\sum_{n=1}^{N}  \int_0^{L_y} \frac{1}{\sqrt{(y-y_n)^2+d^2}} \frac{1}{L_y}\,dy\right). \label{upper bound}
   \end{align}
    For dense waveguide deployments, $\sum_{n=1}^{N}\frac{1}{\sqrt{(y_m-y_n)^2+d^2}}$ concentrates around its mean, making this bound tight. We therefore adopt this upper bound to approximate $\int_0^{L_y} \log_2 \left(\sum_{n=1}^{N}\frac{1}{\sqrt{(y-y_n)^2+d^2}} \right) \frac{1}{L_y}\, dy$.
    By using the anti-derivation $\int \frac{1}{\sqrt{u^2 + d^2}} \, du = \operatorname{asinh}\left(\frac{u}{d}\right)$, where $\operatorname{asinh}(x) = \ln \left(x + \sqrt{x^2 +1} \right)$, we have
    \begin{align}
        &\log_2 \left(\sum_{n=1}^{N}  \int_0^{L_y} \frac{1}{\sqrt{(y-y_n)^2+d^2}} \frac{1}{L_y}\,dy\right) \notag \\
        & =\log_2 \left(\frac{1}{L_y}\sum_{n=1}^{N} \operatorname{asinh} \left(\frac{L_y - y_n}{d}\right) - \operatorname{asinh} \left(\frac{-y_n}{d}\right)\right)  \notag \\
        &\overset{(b)}{=}\log_2 \left(\frac{1}{L_y}\sum_{n=1}^{N} \operatorname{asinh} \left(\frac{L_y - y_n}{d}\right) + \operatorname{asinh} \left(\frac{y_n}{d}\right)\right) \label{append c3}.
    \end{align}
    $(b)$ is because $\operatorname{asinh}$ is an odd function. By replacing the second and third terms in \eqref{ergiduc rate multi waveguide} with \eqref{append c1} and \eqref{append c3}, respectively, this completes the proof.
    
\end{appendices}

\bibliographystyle{IEEEtran}
\bibliography{IEEEfull,pinchingwaveguide}
\end{document}